\documentclass[leqno,11pt]{article}
\RequirePackage{amsthm,amsmath,amsfonts,amssymb}
\RequirePackage[numbers,sort&compress]{natbib}
\RequirePackage[colorlinks,citecolor=blue,urlcolor=blue]{hyperref}
\RequirePackage{mathrsfs}
\RequirePackage{threeparttable}
\usepackage{booktabs}

\theoremstyle{plain}
\newtheorem{theorem}{Theorem}%[section]

\newtheorem{lemma}[theorem]{Lemma}

\theoremstyle{definition}

\newtheorem{example}[theorem]{Example}
\newtheorem{remark}[theorem]{Remark} 
\newtheorem{assumption}[theorem]{Assumption}
\theoremstyle{plain}

\newcommand{\UE}{{\sf E}}
\newcommand{\UP}{{\sf P}}

\newcommand{\RR}{{\Bbb R}}

\def \conv#1{\stackrel{#1}{\longrightarrow}}

\title{Oracle high-dimensional $M$-estimation using smooth reparameterization for sparsity\footnote{Keywords: Lasso; Oracle Inference; Proportional Hazards Model; Score Martingale; RePS}}
\author{Yoichi Nishiyama\footnote{Email: {\tt nishiyama@waseda.jp}} \\
{\sc Waseda University}\footnote{School of International Liberal Studies, Waseda University. 1-6-1 Nishi-Waseda, Shinjuku-ku, Tokyo 169-8050, JAPAN.}}
\begin{document}

\maketitle

\begin{abstract}
This paper establishes a unified non-linear regularization framework for high-dimensional $M$-estimation, encompassing both linear models and Cox's proportional hazards models. Rather than relying on traditional additive non-convex penalties, the proposed paradigm embeds sparsity directly into the transformation for the physical parameter $\theta=\phi(\beta)$ using a smooth ($C^2$) component-wise {\em ReParametrization map for Sparsity (RePS)} $\phi$, and the penalty term is $\lambda \Vert \beta \Vert_1$ rather than $\lambda \Vert \theta \Vert_1$. This structural formulation dynamically adapts to local parameter scales, suppressing high-dimensional noise while simultaneously recovering unbiased oracle asymptotic normality under the large-sample limit. Through the Primal-Dual Witness method, we establish a unified oracle equivalence for both model classes under a liberated micro-penalty scaling regime $\lambda \ll n^{-1/2}$, accommodating model-specific structures such as the shift-invariance in survival analysis. Extensive Monte Carlo simulations demonstrate that the proposed framework consistently achieves superior false-positive control and high 95\% confidence interval coverage in high-dimensional linear models, while also delivering performance comparable or superior in all aspects to state-of-the-art methods like SCAD and MCP in proportional hazards models. 
\end{abstract}

%\begin{keyword}[class=MSC]
%\kwd[Primary ]{62J07}
%\kwd[; secondary ]{62F12}, 
%{62N02}
%\end{keyword}

%\begin{keyword}
%\kwd{Lasso}
%\kwd{oracle inference}
%\kwd{proportional hazards model}
%\kwd{score martingale}
%\kwd{RePS}
%\end{keyword}

%end{frontmatter}

\section{Introduction}
High-dimensional statistical inference is generally based on penalized empirical risk minimization (Tibshirani, 1996; Hastie {\em et al.}, 2015; Wainwright, 2019). Fundamentally, such inference is constrained by information-theoretic limits. Pioneering research on minimax convergence rates has demonstrated the existence of strict lower bounds on estimation accuracy imposed by the ``curse of dimensionality'' and has shown that conventional sparse regression fails when confronted with ultra-high-dimensional phenomena (Raskutti {\em et al.}, 2011; Verzelen, 2012). Convex penalties like Lasso induce non-negligible shrinkage bias, necessitating complex post-selection inference or bias-correction procedures that often require the estimation of a high-dimensional precision matrix (Zhang and Zhang, 2014; van de Geer {\em et al.}, 2014; Javanmard and Montanari, 2014). In contrast, frameworks employing non-convex penalties—such as folded-concave penalties like SCAD (Fan and Li, 2001) and MCP (Zhang, 2010)—achieve the oracle property and eliminate shrinkage bias. While recent theoretical advances have demonstrated that stationary points of these non-convex regularized objectives can possess desirable statistical properties given an appropriate initialization (Loh and Wainwright, 2015), these methods still face significant optimization challenges stemming from strong non-convexity, and their mathematical guarantees rely heavily on the existence of ``good'' local solutions. Furthermore, in the context of survival analysis, such additive penalties often conflict with the shift invariance of the log-partial likelihood (Cox, 1972). In this scenario, the empirical Hessian matrix exhibits inherent structural degeneracy along the constant direction (Bradic {\em et al.}, 2011). Such degeneracy leads to instability in variable selection and distortions in post-selection statistical inference in finite-sample settings. 

The primary objective of this study is to propose an alternative paradigm that—rather than relying solely on additive penalties—directly incorporates ``sparsity'' into the mapping from the optimization space to the physical parameter space. While implicit regularization via parameter reparameterization has been explored in recent literature (Hoff, 2017; Vaskevicius {\em et al.}, 2019), our approach introduces a twice continuously differentiable {\em ReParameterization map for Sparsity (RePS)} that structurally separates active signals from high-dimensional noise. This framework allows us to avoid imposing the so-called restricted eigenvalue condition. Furthermore, by evaluating the scoring field on a quotient space using a specially defined centering projection $P_n$, the method effectively resolves the issue of ``degeneracy in the constant direction'' of the Hessian matrix in survival analysis models. This enables the separation of active paths from high-dimensional inactive noise at the gradient level, thereby guaranteeing robust and unbiased oracle inference relying merely on practical structural regularization parameters and an asymptotic physical scaling regime. 

The utility and finite-sample robustness of the proposed method have been validated through extensive Monte Carlo simulations using linear regression and Cox proportional hazards models. The method suppresses false positives and exhibits low estimation error. Notably, in high-dimensional Cox models with weak signal strength—where conventional methods suffer from reduced coverage probabilities and increased variance during post-selection refitting—the proposed method achieves favorable empirical oracle coverage probabilities both before and after refitting, demonstrating superior statistical performance compared to the standard Lasso as well as state-of-the-art methods such as SCAD and MCP. 

\section{Framework formulation and definitions}

\subsection{High-dimensional $M$-estimation}
Let $Z_1,..., Z_n \in \mathcal{Z}$ be $n$ independent data observations defined on a probability space $(\Omega_n, \mathcal{F}_n, \UP_n)$, where each observation $Z_i = (x_i, y_i) \in \mathbb{R}^p \times \mathbb{R}$ consists of a covariate vector $x_i \in \mathbb{R}^p$ and a scalar response $y_i \in \mathbb{R}$. We consider a high-dimensional regime where the number of features $p$ is vastly larger than the sample size $n$ ($p \gg n$). The design matrix and the response vector are defined as $X = (x_1, \dots, x_n)^\top \in \mathbb{R}^{n \times p}$ and $y = (y_1,..., y_n)^\top \in \mathbb{R}^n$, respectively. 

Let $\gamma_i = x_i^\top \theta \in \mathbb{R}$ represent the linear predictor for each sample, with its vector representation denoted as $\gamma = X\theta \in \mathbb{R}^n$. We consider a random field (empirical loss function) $\mathcal{L}_n : \mathbb{R}^n \to \mathbb{R}$ indexing the discrepancy between these linear predictors and the responses. Specifically, the loss function is often evaluated sample-wise as $\mathcal{L}_n(\gamma) =  \sum_{i=1}^n \ell(\gamma_i; y_i) \in \RR$, where $\ell(\cdot; \cdot)$ measures the misfit between the predictor $\gamma _{i}$ and the response $y_{i}$. See Sections \ref{linear model} and \ref{cox model} for two concrete examples of this general framework.

\begin{assumption}[Restricted local strong convexity]\label{RSC assumption}
The empirical loss function $\mathcal{L}_n(\gamma)$ is almost surely $C^2$ on $\mathbb{R}^n$. 
Let $g_n(\gamma) = \nabla_\gamma \mathcal{L}_n(\gamma) \in \mathbb{R}^n$ be its gradient vector, and $W_n(\gamma) = \nabla_\gamma^2 \mathcal{L}_n(\gamma) \in \mathbb{R}^{n \times n}$ its Hessian matrix. We assume that for every $n \in \mathbb{N}$ there exists an orthogonal projection matrix $P_n \in \mathbb{R}^{n \times n}$ onto a specific subspace  $\mathcal{S}_n \subseteq \mathbb{R}^n$ such that 
\[
P_n g_n(\gamma^*) = g_n(\gamma^*) \quad \text{and} \quad P_n W_n(\gamma^*) P_n = W_n(\gamma^*),
\]
where $\gamma^* = X\theta^*$ with $\theta^*$ being the true value of the physical parameter, and that 
\begin{equation}\label{eq:rsc} 
W_n(\gamma) \succeq \kappa P_n \quad \text{for all } \gamma \in \mathbb{R}^n \text{ satisfying } \Vert \gamma - \gamma^*\Vert_\infty \le \eta,
\end{equation}
with probability tending to one, where $\kappa > 0$ and $\eta > 0$ are deterministic constants not depending on $n$. \end{assumption}

The choice of the projection matrix $P_n$ specifies the geometric constraint of the model under consideration. The primary examples are as follows. 
\begin{example}[Standard global strong convexity]\label{example first}
In Section \ref{linear model}, 
we set $P_n = I_n$ (where $\mathcal{S}_n = \mathbb{R}^n$). In this setup, the matrix inequality (\ref{eq:rsc}) reduces to the standard, unconstrained strong convexity condition $W_n(\gamma) \succeq \kappa I_n$.
\end{example}
\begin{example}[Case of shift-invariance models]\label{example second}
For survival analysis models with shift-invariance treated in Section \ref{cox model}, we set $P_n = I_n - \frac{1}{n}\mathbf{1}_n\mathbf{1}_n^\top$, which is the centering matrix projecting onto the subspace orthogonal to the constant vector $\mathbf{1}_n$.
\end{example}

For notational simplicity, we drop the subscript $n$ from sample-dependent quantities such as $g_n$ and $W_n$ in the remainder of the paper, unless their asymptotic dependencies need to be explicitly emphasized. 

\subsection{ReParametrization map for Sparsity (RePS)}
We introduce a smooth ($C^2$) non-linear shrinkage map $\phi: \mathbb{R}^p \to \mathbb{R}^p$, which maps from the optimization parameter space $\beta \in \mathbb{R}^p$ to the physical parameter space $\theta \in \mathbb{R}^p$. To ensure that the shrinkage mechanism automatically adapts to the local signal scale under weak-signal environments ($\vert{}\theta^*\vert{} \approx 1$) rather than accumulating high-dimensional noise from the non-active orthogonal complement, we use the smooth maximum of the squared optimization parameters via the Log-Sum-Exp (LSE) function 
\begin{equation}\label{LSE function}
M(\beta) = \sigma \log \left( \sum_{j=1}^{p} \exp \left( \frac{\beta_j^2}{\sigma} \right) \right),
\end{equation}
where $\sigma > 0$ is a fixed structural smoothing parameter. Based on this localized scale factor, the component-wise transformation $\phi(\beta)$ is defined as 
\[
\phi(\beta)_i = \beta_i \left( 1 - \exp \left( -\nu \frac{|\beta_i|^3}{M(\beta)} \right) \right), \quad i = 1, \dots, p,
\]
where $\nu > 0$ is another fixed structural constant controlling the sharpness of shrinkage near the origin. Both $\nu$ and $\sigma$ are treated as given constants $O(1)$ throughout the estimation; for notational simplicity, we write
\begin{equation}\label{Ei}
E_i = \exp \left( -\nu \frac{|\beta_i|^3}{M(\beta)} \right).
\end{equation}
This map $\phi$, named {\em ReParametrization map for Sparsity (RePS)}, induces the composite objective function 
\[
F(\beta) = \mathcal{L}_n(\gamma(\beta)) \quad \mbox{with} \quad \gamma(\beta) = X\phi(\beta). 
\]

\subsection{True sparse vector in the physical and optimization spaces}
Let $\theta^* \in \mathbb{R}^p$ be the true physical parameter vector, supported on the active set $\mathcal{S} = \{j : \theta_{j}^* \neq 0\}$ with complement $\mathcal{S}^c = \{j : \theta_{j}^* = 0\}$, satisfying $|\mathcal{S}| \ll p$. We define the corresponding point $\beta^*$ in the optimization space such that $\theta^* = \phi(\beta^*)$. Notice that for any structural constants $\nu, \sigma > 0$, it holds that 
\[
\theta_j^*=0 \quad \Longleftrightarrow \quad \beta_{j}^* = 0. 
\]

\textbf{Notation:} In the sequel, for any $p \times p$ matrix $M$ and $p \times 1$ vector $v$, $M_{\mathcal{S}\mathcal{S}}$ denotes the sub-matrix with rows and columns in $\mathcal{S}$, and $v_{\mathcal{S}}$ does the sub-vector indexed by $\mathcal{S}$.

\subsection{Lasso-type estimator with RePS and target inference}
To counteract the local non-convexity induced by $\phi$, we inject the standard $l_1$-penalty over the optimization space. For a given sample size $n$ and feature dimension $p$, the optimization routine yields 
\[
\hat{\beta}_n \in \arg\min_{\beta \in \mathbb{R}^p} \left\{ \frac{1}{n}F(\beta) + \lambda  \|\beta\|_1 \right\}.
\]
We then naturally construct our target physical estimator directly as a unified function of the data-driven optimization trajectory:
\[
\hat{\theta}_n := \phi(\hat{\beta}_n).
\]
Our goal is to establish the oracle asymptotic normality for this estimator, i.e., the claim that $\sqrt{n}(\hat{\theta}_{n,\mathcal{S}}-\theta_{\mathcal{S}}^*) $ converges in distribution to a Gaussian limit as $n \to \infty$. 

\section{Gradient and Hessian computations}

\subsection{Differential representations for arbitrary $\beta$}\label{subsection derivatives} 
Applying the chain rule, the gradient vector $G(\beta) \in \mathbb{R}^p$ and Hessian matrix $H(\beta) \in \mathbb{R}^{p \times p}$ of $F(\beta)$ are expressed via the Jacobian matrix $J(\beta) = \nabla_\beta \phi(\beta) \in \mathbb{R}^{p \times p}$ by 
\begin{align*}
G(\beta) &= J(\beta)^\top X^\top g(\gamma(\beta)), \\ 
H(\beta) &= J(\beta)^\top X^\top W(\gamma(\beta)) X J(\beta) + \mathcal{E}(\beta), 
\end{align*}
where $\mathcal{E}(\beta) \in \mathbb{R}^{p \times p}$ represents the curvature distortion, whose $(k, l)$ element is 
\[
\mathcal{E}_{kl}(\beta) = \sum_{a=1}^{p} \frac{\partial J_{ak}(\beta)}{\partial \beta_l} (X^\top g(X \phi(\beta)))_a. 
\] 

Differentiating the scale factor $M(\beta)$ with respect to $\beta_j$ yields the soft-max weight vector
\[
\frac{\partial M(\beta)}{\partial \beta_j} = 2\beta_j w_j(\beta), \quad \text{where} \quad w_j(\beta) = \frac{\exp(\beta_j^2 / \sigma)}{\sum_{k=1}^{p} \exp(\beta_k^2 / \sigma)}.
\]
Then, differentiating the components of $\phi(\beta)$ under the total derivative rule yields the elements of the Jacobian matrix 
\begin{align*}
J_{ii}(\beta) &= 1 - E_i + \nu E_i \left( \frac{3|\beta_i|^3 M(\beta) - 2\beta_i^2 |\beta_i|^3 w_i(\beta)}{M(\beta)^2} \right), \\
J_{ij}(\beta) &= -2\nu E_i \left( \frac{\beta_i|\beta_i|^3 \beta_j w_j(\beta)}{M(\beta)^2} \right), \quad (i \neq j), 
\end{align*}
where $E_i$ is given by (\ref{Ei}). 
This algebraic formulation can be compactly expressed as a diagonal matrix perturbed by a rank-one outer product: 
\begin{equation}\label{J1}
J(\beta) = \text{diag}\left( \mathcal{D}_1(\beta), \dots, \mathcal{D}_p(\beta) \right) - v_1(\beta) v_2(\beta)^\top,
\end{equation}
where 
\begin{equation}\label{J2}
\mathcal{D}_i(\beta) = 1 - E_i + 3\nu \frac{|\beta_i|^3}{M(\beta)} E_i, 
\end{equation}
and the coupling vectors are defined as 
\begin{equation}\label{J3}
v_1(\beta)_i = \nu \frac{\beta_i |\beta_i|^3}{M(\beta)^2} E_i \quad \mbox{and} \quad v_2(\beta)_j = 2\beta_j w_j(\beta). 
\end{equation}

\subsection{Oracle block collapse at the true sparse vector $\beta^*$}
Evaluating the Jacobian at the true sparse point $\beta^*$ implies $\beta_i^* = 0$ for all $i \in \mathcal{S}^c$, which forces $E_i = e^0 = 1$ and $v_1(\beta^*)_{\mathcal{S}^c} = \mathbf{0}$. Substituting these values forces all rows and columns associated with the non-active subspace to vanish completely: 
\[
J(\beta^*) = \begin{pmatrix}
J_{\mathcal{S}\mathcal{S}}(\beta^*) & O \\
O & O_{|\mathcal{S}^c| \times |\mathcal{S}^c|}
\end{pmatrix}.
\]

\begin{theorem}[Oracle gradient isolation property]\label{theorem RePS oracle}
The gradient and Hessian evaluated at the true sparse point $\beta^*$ collapse almost surely into the following decoupled sub-blocks:
\begin{align*}
G(\beta^*) &= \begin{pmatrix}
J_{\mathcal{S}\mathcal{S}}(\beta^*)^\top (X^\top g(\gamma^*))_{\mathcal{S}} \\
\mathbf{0}_{|\mathcal{S}^c|}
\end{pmatrix}, \\
H(\beta^*) &= \begin{pmatrix}
J_{\mathcal{S}\mathcal{S}}(\beta^*)^\top X_{\mathcal{S}}^\top W X_{\mathcal{S}} J_{\mathcal{S}\mathcal{S}}(\beta^*) + \mathcal{E}_{\mathcal{S}\mathcal{S}}(\beta^*) & O \\
O & O_{|\mathcal{S}^c| \times |\mathcal{S}^c|}
\end{pmatrix}.
\end{align*}
\end{theorem}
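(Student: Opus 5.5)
The plan is to evaluate the chain-rule formulas of Section \ref{subsection derivatives} at $\beta^*$, using only the block form of $J(\beta^*)$ displayed just before the statement and the explicit formulas (\ref{J1})--(\ref{J3}). No probabilistic input is needed: the loss $\mathcal{L}_n$ enters only through $g(\gamma^*)$ and $W=W(\gamma^*)$, which are fixed random objects. Hence the identities hold pointwise on the almost-sure event where $\mathcal{L}_n$ is $C^2$ (Assumption \ref{RSC assumption}). I will show that the gradient formula and the $J^\top X^\top W X J$ part of the Hessian are exact, that the off-diagonal blocks of $\mathcal{E}(\beta^*)$ vanish, and that the $\mathcal{S}^c\mathcal{S}^c$ block of $\mathcal{E}(\beta^*)$ reduces to a single scalar. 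Whether the displayed zero in that block holds exactly hinges on that scalar, and I do not expect it to vanish in general.

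\textbf{Gradient and the Gauss--Newton part.} From $G(\beta^*)=J(\beta^*)^\top X^\top g(\gamma^*)$, the block form of $J(\beta^*)$ gives $G_{\mathcal{S}}=J_{\mathcal{S}\mathcal{S}}(\beta^*)^\top (X^\top g(\gamma^*))_{\mathcal{S}}$ and $G_{\mathcal{S}^c}=\mathbf{0}$. Here $\gamma(\beta^*)=X\phi(\beta^*)=X\theta^*=\gamma^*$. Similarly, $XJ(\beta^*)$ has columns $X_{\mathcal{S}}J_{\mathcal{S}\mathcal{S}}(\beta^*)$ on $\mathcal{S}$ and $0$ on $\mathcal{S}^c$. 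Therefore $J^\top X^\top W XJ$ is exactly $J_{\mathcal{S}\mathcal{S}}^\top X_{\mathcal{S}}^\top W X_{\mathcal{S}}J_{\mathcal{S}\mathcal{S}}$ in the $\mathcal{S}\mathcal{S}$ block and zero elsewhere.

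\textbf{Curvature term $\mathcal{E}(\beta^*)$.} Write $\mathcal{E}_{kl}=\sum_a \partial_l J_{ak}\,(X^\top g)_a$ with $J_{ak}=\mathcal{D}_a\delta_{ak}-v_1(\beta)_a v_2(\beta)_k$. I will expand $\partial_l J_{ak}$ and treat the cases $a\in\mathcal{S}^c$ and $a\in\mathcal{S}$ separately.
\begin{itemize}
\item For $a\in\mathcal{S}^c$: every term of $\mathcal{D}_a-1$ and of $v_1(\beta)_a$ carries a factor $|\beta_a|^3$ or $\beta_a|\beta_a|^3$. Their first derivatives contain at least $\beta_a^2$, so they vanish at $\beta^*_a=0$. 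These rows contribute nothing to $\mathcal{E}(\beta^*)$.
\item For $a\in\mathcal{S}$ and $k\in\mathcal{S}^c$: we have $\partial_l J_{ak}=-\partial_l v_1(\beta)_a\, v_2(\beta)_k - v_1(\beta)_a\,\partial_l v_2(\beta)_k$. At $\beta^*$, $v_2(\beta^*)_k=0$, and $\partial_l v_2(\beta)_k = 2w_k\delta_{kl}+2\beta_k\partial_l w_k$, which equals $2w_k(\beta^*)\delta_{kl}$.
\end{itemize}
Together these give $\mathcal{E}_{\mathcal{S}^c\mathcal{S}}=0$, $\mathcal{E}_{\mathcal{S}\mathcal{S}^c}=0$ (the latter by symmetry of the Hessian), and
\[
\mathcal{E}_{kl}(\beta^*)=-2\,w_k(\beta^*)\,\delta_{kl}\,\textstyle\sum_{a\in\mathcal{S}} v_1(\beta^*)_a (X^\top g(\gamma^*))_a ,\qquad k,l\in\mathcal{S}^c .
\]

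\textbf{Main obstacle.} The remaining and hardest step is to show that the scalar $c=v_1(\beta^*)_{\mathcal{S}}^\top (X^\top g(\gamma^*))_{\mathcal{S}}$ vanishes; the stated exact zero block in $H(\beta^*)$ holds if and only if $c=0$. I would first try to relate $c$ to $G_{\mathcal{S}}(\beta^*)$ by writing $v_1$ through $J_{\mathcal{S}\mathcal{S}}^{-\top}$, so that $c$ is an inner product with the oracle score. However, $G_{\mathcal{S}}(\beta^*)$ is only $O_P(\sqrt n)$, not zero, and the weights $v_1(\beta^*)_a$ are generically nonzero for $a\in\mathcal{S}$. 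I therefore do not expect $c=0$ in general.

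If $c\neq0$, the argument establishes the statement only in a weaker form: the gradient and the off-diagonal blocks are exactly as claimed, and the $\mathcal{S}^c\mathcal{S}^c$ block equals the diagonal matrix $-2c\,\mathrm{diag}(w_k(\beta^*))_{k\in\mathcal{S}^c}$. Since $w_k(\beta^*)=1/\sum_j e^{\beta_j^{*2}/\sigma}\le 1/p$, this block has entries of order $c/p$, and I would record it as such rather than as exactly zero.
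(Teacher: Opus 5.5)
Your computation is correct. The discrepancy you isolate is a defect in the statement, not a gap in your argument.

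The paper gives no separate proof. Its only justification is the sentence preceding the theorem, which is exactly your first step: substitute the block form of $J(\beta^*)$ into $G=J^\top X^\top g$ and $H=J^\top X^\top WXJ+\mathcal{E}$. It tacitly assumes that $\mathcal{E}(\beta^*)$ inherits the vanishing $\mathcal{S}^c$ rows and columns of $J(\beta^*)$. That assumption fails on the $\mathcal{S}^c\mathcal{S}^c$ block, because $\mathcal{E}$ involves derivatives of $J$. For $a\in\mathcal{S}$ and $k\in\mathcal{S}^c$, the entry $J_{ak}(\beta)=-2v_1(\beta)_a\beta_k w_k(\beta)$ vanishes at $\beta_k=0$ only to first order. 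A one-line check confirms your formula: $\phi_a$ depends on $\beta_k$ only through $M$, and $\partial_k M(\beta^*)=0$ while $\partial_k^2 M(\beta^*)=2w_k(\beta^*)$. Hence $\partial_k^2\phi_a(\beta^*)=-2v_1(\beta^*)_a w_k(\beta^*)\neq 0$.

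What you call the main obstacle is really a counterexample, so you can drop the attempt to prove $c=0$. In the linear model of Section~\ref{linear model}, $c=-\epsilon^\top X_{\mathcal{S}}v_1(\beta^*)_{\mathcal{S}}$. Every entry of $v_1(\beta^*)_{\mathcal{S}}$ is nonzero, since $\beta^*_a\neq 0$ and $E_a>0$. So whenever $X_{\mathcal{S}}$ has full column rank, $c$ is, conditionally on $X$, a nondegenerate centered Gaussian, and $\UP_n(c=0)=0$. The displayed zero block of $H(\beta^*)$ therefore fails almost surely.

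The correct statement is the one you give:
\begin{itemize}
\item $G(\beta^*)$ and the off-diagonal blocks of $H(\beta^*)$ are as claimed;
\item $H_{\mathcal{S}^c\mathcal{S}^c}(\beta^*)=-2c\,\mathrm{diag}(w_k(\beta^*))_{k\in\mathcal{S}^c}$.
\end{itemize}
This last block has entries of absolute value below $2|c|/p$. It is small but nonzero, and negative definite whenever $c>0$.

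Nothing downstream is lost. Lemma~\ref{lemma oracle equivalence} uses only that $G_{\mathcal{S}^c}$ vanishes when $\beta_{\mathcal{S}^c}=0$, and Theorem~\ref{theorem asymptotic normality} uses only the $\mathcal{S}\mathcal{S}$ block.
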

Thus, high-dimensional noise entering from $\mathcal{S}^c$ is successfully quarantined at the gradient level solely via the intrinsic geometric structure of the RePS mapping. 

\begin{lemma}[Structural boundedness of the active Jacobian and its inverse]\label{lemma Jacobian}
For fixed structural parameters $\nu > 0$ and $\sigma > 0$, suppose that the true active physical parameter vector $\theta_{\mathcal{S}}^{*}$ is non-zero. Then, the corresponding active optimization parameter vector $\beta_{\mathcal{S}}^{*}$ has non-zero components bounded away from zero. Consequently, the active Jacobian block $J_{\mathcal{SS}}(\beta^{*})$ is a finite, non-singular deterministic matrix whose inverse $(J_{\mathcal{SS}}(\beta^{*})^{\top})^{-1}$ exists and is also finite.
\end{lemma}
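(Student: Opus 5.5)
The plan is to exploit the explicit rank-one structure (\ref{J1})--(\ref{J3}) of the Jacobian, restricted to $\mathcal{S}$. The argument has three steps: a lower bound on $|\beta^*_j|$, positivity of the diagonal part, and control of the rank-one correction through the matrix determinant lemma and the Sherman--Morrison formula.

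\textbf{Step 1: $\beta^*_{\mathcal{S}}$ is bounded away from zero.} For $j\in\mathcal{S}$ we have $\theta^*_j=\beta^*_j(1-E_j)$. Since $M(\beta^*)>0$, we get $E_j\in(0,1]$, so $0\le 1-E_j<1$. Hence
\[
|\beta^*_j|\ \ge\ |\theta^*_j|\ \ge\ \min_{k\in\mathcal{S}}|\theta^*_k|>0 .
\]
Moreover $\theta^*_j\neq 0$ forces $\beta^*_j\neq0$, which is the equivalence already noted in the paper. This step is elementary.

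\textbf{Step 2: the diagonal part is positive.} Write $r_j=\nu|\beta^*_j|^3/M(\beta^*)>0$, so that $E_j=e^{-r_j}$. Then
\[
\mathcal{D}_j=1-e^{-r_j}+3r_je^{-r_j}>0 \qquad (j\in\mathcal{S}).
\]
Consequently $D_{\mathcal{S}}=\mathrm{diag}(\mathcal{D}_j)_{j\in\mathcal{S}}$ is invertible. By (\ref{J1}) we can then write $J_{\mathcal{SS}}(\beta^*)=D_{\mathcal{S}}-v_{1,\mathcal{S}}v_{2,\mathcal{S}}^\top$, and the matrix determinant lemma gives
\[
\det J_{\mathcal{SS}}=\det D_{\mathcal{S}}\,\bigl(1-v_{2,\mathcal{S}}^\top D_{\mathcal{S}}^{-1}v_{1,\mathcal{S}}\bigr).
\]

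\textbf{Step 3: the Sherman--Morrison denominator stays away from zero.} A direct computation gives
\[
v_{2,\mathcal{S}}^\top D_{\mathcal{S}}^{-1}v_{1,\mathcal{S}}=\sum_{j\in\mathcal{S}}2\,\frac{\beta_j^{*2}w_j}{M}\cdot\frac{r_jE_j}{1-E_j+3r_jE_j}.
\]
Two bounds control this sum.
\begin{itemize}
\item The second factor satisfies $r_jE_j/(1-E_j+3r_jE_j)\le 1/3$, because $1-E_j\ge0$.
\item The log-sum-exp identity $M=\sum_k w_k\beta_k^{*2}+\sigma\,\mathrm{Ent}(w)$ gives $\sum_{j\in\mathcal{S}}w_j\beta_j^{*2}\le M$.
\end{itemize}
Hence the sum is at most $2/3$. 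Therefore $\det J_{\mathcal{SS}}\ge\tfrac13\det D_{\mathcal{S}}>0$, and Sherman--Morrison yields
\[
J_{\mathcal{SS}}^{-1}=D_{\mathcal{S}}^{-1}+\frac{D_{\mathcal{S}}^{-1}v_{1,\mathcal{S}}v_{2,\mathcal{S}}^\top D_{\mathcal{S}}^{-1}}{1-v_{2,\mathcal{S}}^\top D_{\mathcal{S}}^{-1}v_{1,\mathcal{S}}},
\]
with denominator at least $1/3$. Transposing gives $(J_{\mathcal{SS}}^\top)^{-1}$. Determinism is immediate, since $\beta^*$ is a deterministic function of $\theta^*$.

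\textbf{Main obstacle: finiteness of the inverse.} We need $\mathcal{D}_j$ bounded below and the entries of $v_1,v_2$ bounded, uniformly along the asymptotics.
\begin{itemize}
\item The upper bounds are easy. We have $|v_{1,j}|\le \nu|\beta^*_j|^4/M^2$, $M\ge\max_k\beta_k^{*2}$ and $w_j\le1$, so everything is controlled by $\max_{j\in\mathcal{S}}|\beta^*_j|$, which is finite for fixed $\theta^*_{\mathcal{S}}$ by monotonicity of the scalar profile.
\item The lower bound on $\mathcal{D}_j$ is more delicate. The function $r\mapsto1-e^{-r}+3re^{-r}$ tends to $0$ only as $r\to0$. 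Since $M(\beta^*)\le\max_k\beta_k^{*2}+\sigma\log p$, the quantity $r_j$ could degenerate if $\log p$ grows.
\end{itemize}
I would first treat $p$ (hence $M(\beta^*)$) as fixed with $n$, as the statement "for fixed structural parameters" suggests. In that case $r_j\ge\nu(\min_{\mathcal{S}}|\theta^*|)^3/M(\beta^*)>0$, which gives a uniform bound. If $p=p_n\to\infty$ is intended, I would state explicitly the additional requirement that $M(\beta^*)$ remain bounded (or that $\sigma$ scale with $\log p$), and verify it is compatible with the existence of the fixed point $\beta^*$ solving $\phi(\beta^*)=\theta^*$.
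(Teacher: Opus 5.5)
Your proof is correct, and it takes a genuinely different and more complete route than the paper's.

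The paper obtains $\beta^*_j\neq 0$ by contradiction. It then argues only qualitatively that $\beta^*_{\mathcal{S}}$ is ``neither zero nor divergent'' and asserts $\det J_{\mathcal{SS}}(\beta^*)\neq 0$ (``guaranteed by the non-degeneracy of the active mapping'') before invoking Cramer's rule. So the non-singularity, which is the real content of the lemma, is never actually verified.

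You do two things differently:
\begin{itemize}
\item You give the explicit, $p$-uniform bound $|\beta^*_j|\ge|\theta^*_j|$.
\item You prove non-singularity from the rank-one structure (\ref{J1})--(\ref{J3}). The ingredients are positivity of $\mathcal{D}_j$, the matrix determinant lemma, and the bound $v_{2,\mathcal{S}}^\top D_{\mathcal{S}}^{-1}v_{1,\mathcal{S}}\le 2/3$, which comes from $r_jE_j/\mathcal{D}_j\le 1/3$ and the entropy form of the log-sum-exp. This gives $\det J_{\mathcal{SS}}\ge\frac13\det D_{\mathcal{S}}>0$ and an explicit Sherman--Morrison inverse.
\end{itemize}
The paper's route buys only brevity. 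Yours supplies the missing argument and shows which quantity controls the size of the inverse.

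Your ``main obstacle'' discussion can be sharpened.
\begin{itemize}
\item We have $\mathcal{D}_j\ge 1-E_j=|\theta^*_j|/|\beta^*_j|$.
\item The Sherman--Morrison correction has non-negative diagonal entries, so $(J_{\mathcal{SS}}^{-1})_{jj}\ge 1/\mathcal{D}_j$.
\item Together these show that $J_{\mathcal{SS}}^{-1}$ is uniformly bounded exactly when $\beta^*_{\mathcal{S}}$ is.
\item Since $M(\beta^*)\ge\sigma\log p$, a fixed $\sigma$ with $p=p_n\to\infty$ forces $|\beta^*_j|\to\infty$, then $r_j\to 0$ and $\mathcal{D}_j\to 0$.
\end{itemize}
So the degeneration is not merely possible but unavoidable. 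Treating $p$ as fixed, as you propose, is precisely the tacit assumption behind the paper's ``not divergent'' claim.

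One small slip in your side remark: keeping $M(\beta^*)$ bounded requires $\sigma\log p=O(1)$, i.e.\ $\sigma$ shrinking like $1/\log p$, not $\sigma$ growing with $\log p$. Varying $\sigma$ is in any case outside the lemma's fixed-parameter hypothesis.
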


\begin{proof}
Let $\mathcal{S}$ be the true active set.  

{\em 1. Boundedness away from zero:}  By the mapping definition $\theta_i = \beta_i (1 - E_i)$, if $\beta_i^* = 0$, then $\theta_i^* = 0$, which contradicts the assumption that $i \in \mathcal{S}$. Since $\theta_{\mathcal{S}}^{*}$ is a fixed, non-zero vector of finite dimension, the preimage $\beta_{\mathcal{S}}^{*}$ is neither zero nor divergent. Thus, there exists a constant $c > 0$ such that $|\beta_i^*| \ge c$ for all $i \in \mathcal{S}$.

{\em 2. Finiteness and non-singularity of $J_{\mathcal{SS}}(\beta^*)$:} 
The active Jacobian block $J_{\mathcal{SS}}(\beta^{*})$ is composed of diagonal terms $\mathcal{D}_i(\beta^*)$ and bounded coupling terms involving soft-max weights $w_j(\beta^*)$, where $0 < E_i < 1$ for all $i \in \mathcal{S}$. Because $\beta_{\mathcal{S}}^{*}$ and the structural constants $\nu, \sigma$ are finite, every element of $J_{\mathcal{SS}}(\beta^{*})$ is a well-defined, finite real number. 

Since $J_{\mathcal{SS}}(\beta^{*})$ is a finite square matrix of size $|\mathcal{S}| \times |\mathcal{S}|$ with a non-zero determinant (guaranteed by the non-degeneracy of the active mapping), it is non-singular. By Cramer's rule, the inverse $(J_{\mathcal{SS}}(\beta^{*})^{\top})^{-1}$ exists and consists entirely of finite entries, completing the proof.
\end{proof}

\section{Primal-Dual Witness construction and tuning parameter liberation}\label{section PDW}

To formally eliminate the theoretical dependency on ad-hoc local initializers and overcome the fundamental dimensional bottleneck, we construct our proof via the Primal-Dual Witness (PDW) method (Wainwright, 2009). Instead of bounding the leaking noise of a high-dimensional empirical estimator over a shrinking neighborhood $\mathcal{B}_n$, we explicitly construct an oracle estimator $\hat{\beta}^{oracle} \in \mathbb{R}^p$ restricted to the true active subspace, defined as $\hat{\beta}^{oracle}_{\mathcal{S}^c} = \mathbf{0}$ and
\[
\hat{\beta}^{oracle}_{\mathcal{S}} \in \arg\min_{\beta_{\mathcal{S}} \in \mathbb{R}^{|\mathcal{S}|}} \left\{ \frac{1}{n}F(\beta_{\mathcal{S}}, \mathbf{0}_{\mathcal{S}^c}) + \lambda \Vert \beta_{\mathcal{S}} \Vert_1 \right\}.
\]
Since $\hat{\beta}^{oracle}_{\mathcal{S}}$ is essentially a low-dimensional $M$-estimator operating exclusively on the $|\mathcal{S}|$-dimensional active subspace, it completely eliminates high-dimensional noise accumulation. Under standard regularity conditions, it intrinsically achieves the $\sqrt{n}$-consistency rate $(\hat{\beta}^{oracle}_{\mathcal{S}} - \beta_{\mathcal{S}}^*) = O_{\UP_n}(n^{-1/2})$ without requiring any high-dimensional maximal inequalities. 
Importantly, we must verify that this restricted oracle estimator is, in fact, the global minimizer of the full $p_n$-dimensional non-convex problem $\hat{\beta}_n$. 

\begin{lemma}[Oracle equivalence via Primal-Dual Witness] \label{lemma oracle equivalence}
Suppose that the empirical loss function $\mathcal{L}_n$ satisfies the Restricted Local Strong Convexity (Assumption \ref{RSC assumption}). 
Assume that there exists a local neighborhood $\mathcal{B}_n = \{ \beta_{\mathcal{S}} \in \mathbb{R}^{\vert{}\mathcal{S}\vert{}} : \Vert{}\beta_{\mathcal{S}} - \beta_{\mathcal{S}}^*\Vert{}_\infty \le \delta \}$ such that 
\[
\frac{1}{n} H_{\mathcal{SS}}(\beta_{\mathcal{S}}) \succ 0 \quad \text{for all } \beta_{\mathcal{S}} \in \mathcal{B}_n
\]
with probability tending to one, 
ensuring the unique existence of the restricted oracle estimator $\hat{\beta}^{oracle}$ (where $\hat{\beta}_{\mathcal{S}^c}^{oracle} = \mathbf{0}$). 
Let $\hat{\beta}_n$ be a local minimizer of the $p_n$-dimensional penalized objective function that satisfies the KKT conditions, under any penalty parameter $\lambda > 0$. Then, it holds that
\[
\lim_{n \to \infty} \UP_n(\hat{\beta}_n = \hat{\beta}^{oracle}) = 1.
\]
\end{lemma}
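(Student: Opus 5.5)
We follow the standard Primal-Dual Witness template of Wainwright (2009), adapted to the RePS geometry. The work splits into three tasks: show that the oracle point satisfies the full KKT system, show that nothing else can, and control the sign and curvature near $\beta^*$.

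\textbf{Step 1 (primal on $\mathcal{S}$).} On the event where $\frac{1}{n}H_{\mathcal{SS}}\succ 0$ on $\mathcal{B}_n$, which has probability tending to one, we take $\hat{\beta}^{oracle}_{\mathcal{S}}$ to be the unique restricted minimizer. Using the $\sqrt{n}$-consistency noted before the lemma, $\hat{\beta}^{oracle}_{\mathcal{S}}$ lies in the interior of $\mathcal{B}_n$ with probability tending to one. By Lemma \ref{lemma Jacobian}, $|\beta^*_i|\ge c>0$ for $i\in\mathcal{S}$, so $\mathrm{sign}(\hat{\beta}^{oracle}_{\mathcal{S}})=\mathrm{sign}(\beta^*_{\mathcal{S}})$ eventually. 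The restricted KKT conditions then read $\frac1n G_{\mathcal{S}}(\hat{\beta}^{oracle})+\lambda\,\mathrm{sign}(\beta^*_{\mathcal{S}})=0$.

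\textbf{Step 2 (dual on $\mathcal{S}^c$).} This is the key structural point. At any $\beta$ with $\beta_{\mathcal{S}^c}=\mathbf{0}$, the argument behind Theorem \ref{theorem RePS oracle} applies verbatim: $E_i=1$ and $v_1(\beta)_i=0$ for $i\in\mathcal{S}^c$, and also $v_2(\beta)_j=2\beta_jw_j=0$ for $j\in\mathcal{S}^c$. Hence the rows and columns of $J(\beta)$ indexed by $\mathcal{S}^c$ vanish, giving $G_{\mathcal{S}^c}(\hat{\beta}^{oracle})=J_{\cdot\mathcal{S}^c}^\top X^\top g=\mathbf{0}$. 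We therefore choose the dual witness $\hat z_{\mathcal{S}^c}=-\frac{1}{n\lambda}G_{\mathcal{S}^c}=\mathbf{0}$. Strict dual feasibility $\|\hat z_{\mathcal{S}^c}\|_\infty<1$ then holds trivially for every $\lambda>0$, with no incoherence condition. This is where the ``liberation'' of $\lambda$ comes from. Thus $\hat{\beta}^{oracle}$ satisfies the full $p$-dimensional KKT system.

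\textbf{Step 3 (uniqueness / identification with $\hat\beta_n$).} This is the main obstacle. The objective is non-convex, and $H_{\mathcal{S}^c\mathcal{S}^c}$ vanishes at $\beta^*$, so second-order strict convexity in the $\mathcal{S}^c$ directions is unavailable. Instead we exploit the $\ell_1$ term, which dominates there.

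We first argue locally. Take $\beta=\hat{\beta}^{oracle}+\Delta$ with $\Delta$ small. Since $\phi_i(\beta)=O(|\beta_i|^4/M)$ for $i\in\mathcal{S}^c$, and the $\mathcal{S}^c$ part of $J$ is $O(|\beta_{\mathcal{S}^c}|^3)$, a Taylor expansion in $\Delta_{\mathcal{S}^c}$ gives
\[
\tfrac1n F(\beta)-\tfrac1n F(\hat{\beta}^{oracle}_{\mathcal{S}}+\Delta_{\mathcal{S}},\mathbf{0}) = O\bigl(\|\Delta_{\mathcal{S}^c}\|^{4}\cdot\|X^\top g\|_\infty/n\bigr).
\]
This is dominated by $\lambda\|\Delta_{\mathcal{S}^c}\|_1$ in a neighbourhood. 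Combined with the positive definiteness of $H_{\mathcal{SS}}$ on $\mathcal{B}_n$, this shows that $\hat{\beta}^{oracle}$ is a strict local minimizer and the only KKT point in that neighbourhood.

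To pass from local to the $\hat\beta_n$ of the statement, we would do two things. First, we use Assumption \ref{RSC assumption}: $W\succeq\kappa P_n$ on $\{\|\gamma-\gamma^*\|_\infty\le\eta\}$ gives restricted strong convexity of $\mathcal{L}_n$ in $\gamma$, with $P_n$ absorbing the shift-invariant direction in the Cox case. Second, we show that any KKT point $\hat\beta_n$ must satisfy $\hat\beta_{n,\mathcal{S}^c}=0$. Suppose instead $\hat\beta_{n,j}\neq0$ for some $j\in\mathcal{S}^c$. Then the KKT equation $G_j=-n\lambda\,\mathrm{sign}(\hat\beta_{n,j})$ would require $|J_{\cdot j}^\top X^\top g|=n\lambda$. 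We would try to contradict this by showing that $G_j$ is $O(|\beta_j|^3)$ times a stochastically bounded score, so it can only reach $n\lambda$ away from the neighbourhood. The global part is where we expect to have to restrict attention to KKT points whose predictor $\gamma(\hat\beta_n)$ lies in the $\eta$-tube around $\gamma^*$ (the implicit meaning of ``local minimizer'' here). On that event, the local uniqueness argument above identifies $\hat\beta_n=\hat{\beta}^{oracle}$ with probability tending to one.
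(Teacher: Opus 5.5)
Your Steps 1 and 2 coincide with the paper's proof. At any point with $\beta_{\mathcal{S}^c}=\mathbf{0}$, the rows and columns of $J$ indexed by $\mathcal{S}^c$ vanish, so $G_{\mathcal{S}^c}(\hat{\beta}^{oracle})=\mathbf{0}$ and $\hat z_{\mathcal{S}^c}=\mathbf{0}$ is strictly dual feasible for every $\lambda>0$. The paper stops there and asserts $\hat{\beta}_n=\hat{\beta}^{oracle}$ without further argument. You are right that the identification in Step 3 is the real content. However, your Step 3 is a genuine gap, and it cannot be closed, because the lemma is false as stated.

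\emph{The Step 2 cancellation is not specific to $\mathcal{S}$.} At $\beta=\mathbf{0}$ all $E_i=1$, so $\mathcal{D}_i(\mathbf{0})=0$ and $v_1(\mathbf{0})=\mathbf{0}$. Hence $J(\mathbf{0})=O$, $G(\mathbf{0})=\mathbf{0}$, and the KKT conditions hold with $z=\mathbf{0}$. Since $M(\beta)\ge\sigma\log p$ and $1-e^{-x}\le x$, we get $|\phi_i(\beta)|\le\nu|\beta_i|^4/(\sigma\log p)$. Therefore $\frac{1}{n}|F(\beta)-F(\mathbf{0})|\le C_n\|\beta\|_\infty^3\|\beta\|_1<\lambda\|\beta\|_1$ for small $\beta\neq\mathbf{0}$. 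So $\mathbf{0}$ is a strict local minimizer satisfying the KKT conditions for every $\lambda>0$. Choosing $\hat{\beta}_n=\mathbf{0}$ gives $\UP_n(\hat{\beta}_n=\hat{\beta}^{oracle})=0$.

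More generally, take any index set $A$ and any local minimizer of the objective restricted to $\{\beta_{A^c}=\mathbf{0}\}$. It is a KKT local minimizer of the full problem. Perturbing $\beta_{A^c}$ moves $\phi_{A^c}$ only at fourth order, and moves $M$ (hence $\phi_A$) only at second order; the linear penalty dominates both. So your intended claim that every KKT point has $\hat{\beta}_{n,\mathcal{S}^c}=\mathbf{0}$ is false. The same $M$-coupling also makes your displayed bound second order in $\Delta_{\mathcal{S}^c}$ rather than fourth, although your local conclusion survives.

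\emph{Restricting to the $\eta$-tube does not rescue the argument.} Your contradiction step uses $|G_j|\lesssim|\beta_j|^3|X_j^\top g|$ with $|X_j^\top g|\asymp\sqrt{n}$. This only yields $|\hat{\beta}_{n,j}|\gtrsim(\lambda\sqrt{n})^{1/3}$, up to logarithmic factors. Under $\lambda\ll n^{-1/2}$ this lower bound tends to zero, so it gives no contradiction inside a fixed tube. For the same reason, the neighbourhood on which your $\ell_1$-domination argument works shrinks at that rate.

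Spurious minimizers do occur inside the tube. Take the linear model with oracle residual $r$. Keep $\beta_{\mathcal{S}}=\hat{\beta}^{oracle}_{\mathcal{S}}$, pick $j\in\mathcal{S}^c$, and choose $\beta_j$ with $\phi_j\approx X_j^\top r/\|X_j\|_2^2\asymp n^{-1/2}$, i.e.\ $|\beta_j|\asymp(Mn^{-1/2}/\nu)^{1/4}$.
\begin{itemize}
\item This lowers $\frac{1}{n}F$ by an amount of order $n^{-1}$.
\item The penalty cost is $\lambda|\beta_j|=o(n^{-1})$, e.g.\ for $\lambda=n^{-1}$ and $\log p=o(\sqrt{n})$.
\item Minimizing over a small compact neighbourhood on the support $\mathcal{S}\cup\{j\}$ therefore gives a value strictly below that of $\hat{\beta}^{oracle}$, so the minimizer has $\beta_j\neq 0$.
\item For a generic design this minimizer is interior. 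By the previous argument it is then a second KKT local minimizer, with predictor well inside the tube.
\end{itemize}
For such $\lambda$ the oracle is not even the global minimizer. Restricted strong convexity of $\mathcal{L}_n$ in $\gamma$ cannot exclude this, because $\beta\mapsto X\phi(\beta)$ is far from injective and its Jacobian degenerates wherever coordinates vanish.

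\emph{What you can actually prove.} Steps 1--2 plus the local part of your Step 3 give an existence statement: with probability tending to one, $\hat{\beta}^{oracle}$ is a strict local minimizer of the full $p$-dimensional objective. That is all the dual certificate delivers here. Which local minimizer a computed $\hat{\beta}_n$ equals has to come from the initialization (cf.\ the paper's remark on warm starts), not from the KKT conditions.
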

\begin{remark}[Initial value requirements]\label{init value remark}
Due to the inherent non-convexity introduced by the RePS mapping $\phi$, finding the exact global minimizer of the composite objective function is generally computationally intractable. Following standard practices in high-dimensional non-convex estimation (e.g., Fan and Li, 2001; Loh and Wainwright, 2015), our theoretical guarantee via the Primal-Dual Witness construction fundamentally ensures that the oracle-equivalent estimator exists as a regularized local minimizer within a specific statistical neighborhood of the true parameter. Consequently, in practical implementations, solving this non-convex optimization requires a sufficiently accurate initial estimator (warm start) to ensure the algorithm begins within the correct basin of attraction and successfully converges to this desired local minimum.
\end{remark}

\begin{proof}
By the assumption of local positive definiteness of the restricted Hessian, the restricted oracle estimator $\hat{\beta}^{oracle}$ is well-defined as a local minimizer on the active subspace. 
By the structural definition of the RePS mapping $\phi$, the partial derivatives of any physical component with respect to a non-active optimization parameter $\beta_j$ ($j \in \mathcal{S}^c$) intrinsically contain the multiplicative factor $\beta_j$. 
When evaluated at the constructed oracle point where $\hat{\beta}_{\mathcal{S}^c}^{oracle} = \mathbf{0}$, all elements in the columns of the Jacobian corresponding to $\mathcal{S}^c$ systematically vanish: $J_{k, j}(\hat{\beta}^{oracle}) = 0$ for all $k \in \{1, \dots, p\}$ and $j \in \mathcal{S}^c$. 
Consequently, the scaled gradient vector of the loss function projected onto the non-active subspace algebraically disappears:
\[
\frac{1}{n} G_{\mathcal{S}^c}(\hat{\beta}^{oracle}) = \frac{1}{n} J_{\cdot, \mathcal{S}^c}(\hat{\beta}^{oracle})^\top X^\top g(X\phi(\hat{\beta}^{oracle})) \equiv \mathbf{0}.
\]
For $\hat{\beta}^{oracle}$ to be the exact global minimizer of the full high-dimensional problem, it must satisfy the dual KKT condition for the non-active components: $\Vert \frac{1}{n} G_{\mathcal{S}^c}(\hat{\beta}^{oracle}) \Vert_\infty \le \lambda$. 
Since the gradient is deterministically zero, this condition reduces to $0 \le \lambda$, which is satisfied for any operational $\lambda > 0$. 
Through the Primal-Dual Witness construction, this algebraic cancellation structurally isolates the active subspace from the non-active dimensions, granting the asymptotic equivalence $\lim_{n \to \infty} \UP_n(\hat{\beta}_n = \hat{\beta}^{oracle}) = 1$.
\end{proof}

This algebraic cancellation liberates the regularization path from the dimensionality $p_n$, confirming that the ultra-sparse recovery can be achieved merely by purging the internal active noise using a micro-penalty $\lambda \ll n^{-1/2}$; see Table \ref{tab:scaling_rules}. 

\begin{table}[htbp]
\centering
\caption{Liberated scaling rules for oracle equivalence under RePS}
\label{tab:scaling_rules}
\begin{tabular}{ll}
\toprule
\textbf{Parameter} & \textbf{Asymptotic Scaling Rule \& Function} \\
\midrule
Structural Parameters & $\nu = O(1)$ and $\sigma = O(1)$. \\
& Fixed parameters ensuring stable $O(1)$ curvature \\
& and algorithmic stability without asymptotic decay. \\
\addlinespace
Primary Penalty $\lambda$ & $0 < \lambda(n) \ll n^{-1/2}$. \\
& Secures the oracle equivalence (PDW) by trapping \\ 
& non-active components exactly at zero, while \\
& completely neutralizing shrinkage bias on active components. \\
\bottomrule
\end{tabular}
\end{table}

\section{Oracle asymptotic normality}
We establish the unified asymptotic normality of the active components of the physical estimator $\hat{\theta}_{n,\mathcal{S}}$. Due to the oracle isolation property and the algebraic cancellation under the Delta method, the framework secures an ideal oracle asymptotic variance. Actually, since the Primal-Dual Witness construction fundamentally restricts the parameter estimation to the active subspace $\mathcal{S}$ (where $|\mathcal{S}| \ll n$), we may assume that the empirical loss function satisfies standard low-dimensional regularity conditions on this restricted space. 
Specifically, we assume the following. 

\begin{assumption}[Regularity and concentration on $\mathcal{S}$]\label{low dimensional assumption}
The following conditions must hold as $n \to \infty$:
\begin{itemize}
    \item[(i)] $\frac{1}{n} X_{\mathcal{S}}^\top W(\gamma^*) X_{\mathcal{S}} \conv{\UP_n} \Sigma_{\mathcal{S}\mathcal{S}} \succ 0$.
    \item[(ii)] $\frac{1}{\sqrt{n}} X_{\mathcal{S}}^\top g(\gamma^*) \conv{d} \mathcal{N}(0, \Omega_{\mathcal{S}\mathcal{S}})$.
    \item[(iii)] $\frac{1}{n} \mathcal{E}_{\mathcal{S}\mathcal{S}}(\beta^*) = O_{\UP_n}(n^{-1/2})$.
    \item[(iv)] There exist a deterministic constant $\delta > 0$ and a positive random variable $L_n = O_{\UP_n}(1)$ such that for all $\beta_{\mathcal{S}} \in \mathbb{R}^{\vert{}\mathcal{S}\vert{}}$ satisfying $\Vert \beta_{\mathcal{S}} - \beta_{\mathcal{S}}^*\Vert_\infty \le \delta$, the restricted empirical Hessian satisfies the local Lipschitz condition: 
\[   
\left\Vert \frac{1}{n}H_{\mathcal{SS}}(\beta_{\mathcal{S}}) - \frac{1}{n}H_{\mathcal{SS}}(\beta_{\mathcal{S}}^*) \right\Vert_\infty \le L_n \Vert \beta_{\mathcal{S}} - \beta_{\mathcal{S}}^*\Vert_\infty.
\]    
\end{itemize}
\end{assumption}

\begin{remark}[Generality of the subspace assumptions] 
In Assumption \ref{low dimensional assumption}, rather than specifying exhaustive primitive conditions (e.g., specific tail bounds or sub-Gaussian design requirements), we formulate high-level regularity conditions on the active subspace. Because the oracle isolation property guarantees that the dimensionality drops to $|\mathcal{S}|$, classical asymptotic theory for standard $M$-estimation directly applies to this restricted subspace. The validity of these high-level assumptions is straightforwardly verifiable under standard conditions for specific generalized linear models or survival models, such as those demonstrated in subsequent sections. 
\end{remark}

\begin{theorem}[Oracle asymptotic normality]\label{theorem asymptotic normality}
Let $\mathcal{L}_n(\gamma)$ be the loss function satisfying Assumption \ref{RSC assumption}, and let the primary penalty scale as $0 < \lambda \ll n^{-1/2}$. 
Then, under Assumption \ref{low dimensional assumption}, 
for the unified target physical estimator $\hat{\theta}_{n,\mathcal{S}} = \phi(\hat{\beta}_n)_{\mathcal{S}}$, the following asymptotic distribution holds true: 
\[
\sqrt{n}( \hat{\theta}_{n,\mathcal{S}} - \theta^*_{\mathcal{S}}) \conv{d} \mathcal{N}\left( 0, \Sigma_{\mathcal{S}\mathcal{S}}^{-1} \Omega_{\mathcal{S}\mathcal{S}}\Sigma_{\mathcal{S}\mathcal{S}}^{-1} \right).
\]
\end{theorem}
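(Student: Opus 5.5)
We reduce to the oracle estimator and then run a standard low-dimensional $M$-estimation argument in the optimization coordinates, finishing with the delta method. By Lemma \ref{lemma oracle equivalence}, on an event of probability tending to one we have $\hat{\beta}_n = \hat{\beta}^{oracle}$. Since convergence in distribution is unaffected by modifications on events of vanishing probability, it suffices to prove the claim for $\hat{\theta}^{oracle}_{\mathcal{S}} = \phi(\hat{\beta}^{oracle})_{\mathcal{S}}$. Because $\hat{\beta}^{oracle}_{\mathcal{S}^c} = \mathbf{0}$, we have $E_j=1$ for $j\in\mathcal{S}^c$, so the physical vector restricted to $\mathcal{S}$ is a smooth function of $\beta_{\mathcal{S}}$ alone; $M$ still depends on $p$, but only through the additive constant $|\mathcal{S}^c|$ inside the logarithm.

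\textbf{Step 1: consistency and the $\sqrt{n}$-rate.} Working on $\mathcal{B}_n$, we use Assumption \ref{low dimensional assumption}(i),(iii) together with Lemma \ref{lemma Jacobian}. These give
\[
\tfrac1n H_{\mathcal{SS}}(\beta^*_{\mathcal{S}}) \conv{\UP_n} A := J_{\mathcal{SS}}(\beta^*)^\top \Sigma_{\mathcal{SS}} J_{\mathcal{SS}}(\beta^*) \succ 0,
\]
and (iv) makes $\frac1n H_{\mathcal{SS}}$ uniformly close to $A$ on a shrinking ball. By Theorem \ref{theorem RePS oracle} and (ii), the score satisfies $n^{-1/2}G_{\mathcal{S}}(\beta^*) = J_{\mathcal{SS}}^\top n^{-1/2}(X^\top g(\gamma^*))_{\mathcal{S}} = O_{\UP_n}(1)$. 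A standard convexity-on-a-ball argument then shows that a minimizer lies within $O_{\UP_n}(n^{-1/2})$ of $\beta^*_{\mathcal{S}}$. Concretely, compare the penalized objective on the sphere $\Vert u\Vert = C$ for $\beta_{\mathcal{S}} = \beta^*_{\mathcal{S}} + u/\sqrt n$: the quadratic term $\tfrac12 u^\top A u$ dominates the linear score term, and the penalty change is at most $\lambda\Vert u\Vert_1/\sqrt n = o(n^{-1})$. Local strict convexity on $\mathcal{B}_n$ then identifies this minimizer with $\hat{\beta}^{oracle}_{\mathcal{S}}$.

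\textbf{Step 2: expand the KKT equation.} Since $|\beta^*_i|\ge c$ by Lemma \ref{lemma Jacobian}, the oracle's active components are nonzero with probability tending to one. The KKT condition is therefore the equation $\frac1n G_{\mathcal{S}}(\hat\beta) + \lambda\,\mathrm{sign}(\hat\beta_{\mathcal{S}}) = 0$. A mean-value (integral) expansion gives
\[
0=\tfrac1n G_{\mathcal{S}}(\beta^*) + \tfrac1n \bar H_{\mathcal{SS}}(\hat\beta_{\mathcal{S}}-\beta^*_{\mathcal{S}}) + \lambda s,
\]
where $\bar H_{\mathcal{SS}}$ is the averaged Hessian along the segment. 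By (iv) and Step 1, $\frac1n\bar H_{\mathcal{SS}} = A + o_{\UP_n}(1)$. Multiplying by $\sqrt n$ and using $\sqrt n\lambda\to0$ gives
\[
\sqrt n(\hat\beta_{\mathcal{S}}-\beta^*_{\mathcal{S}}) = -A^{-1}J_{\mathcal{SS}}^\top n^{-1/2}(X^\top g(\gamma^*))_{\mathcal{S}} + o_{\UP_n}(1) = -J_{\mathcal{SS}}^{-1}\Sigma_{\mathcal{SS}}^{-1}\, n^{-1/2}(X^\top g(\gamma^*))_{\mathcal{S}}+o_{\UP_n}(1).
\]
Here the factor $J_{\mathcal{SS}}^\top$ cancels against $(J_{\mathcal{SS}}^\top)^{-1}$, which exists by Lemma \ref{lemma Jacobian}.

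\textbf{Step 3: delta method.} By the block structure of $J(\beta^*)$, the $\mathcal{S}$-block of the differential of $\beta_{\mathcal{S}}\mapsto\phi(\beta_{\mathcal{S}},\mathbf 0)_{\mathcal{S}}$ at $\beta^*$ is $J_{\mathcal{SS}}(\beta^*)$, and $\phi$ is $C^2$. Hence
\[
\sqrt n(\hat\theta_{\mathcal{S}}-\theta^*_{\mathcal{S}}) = -\Sigma_{\mathcal{SS}}^{-1} n^{-1/2}(X^\top g(\gamma^*))_{\mathcal{S}}+o_{\UP_n}(1),
\]
so the Jacobian cancels completely. Applying (ii) and Slutsky's lemma yields the limit $\mathcal N(0,\Sigma_{\mathcal{SS}}^{-1}\Omega_{\mathcal{SS}}\Sigma_{\mathcal{SS}}^{-1})$.

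The main obstacle is Step 1. We must show that the oracle minimizer actually lies in the shrinking neighbourhood where the expansion is valid, and that it coincides with the unique local minimizer from Lemma \ref{lemma oracle equivalence} rather than some distant stationary point. For this we would rely on the local positive definiteness on $\mathcal{B}_n$ and interpret $\hat\beta^{oracle}$ as the minimizer over $\mathcal{B}_n$, in line with Remark \ref{init value remark}. A secondary point to check is that $J_{\mathcal{SS}}(\beta^*)$ and $A$ remain stable as $p$ grows, since $M(\beta^*)$ involves $\sigma\log(|\mathcal{S}^c|+\cdots)$. We would treat $\beta^*$, and hence $J_{\mathcal{SS}}(\beta^*)$, as given through $\theta^*=\phi(\beta^*)$ for each $n$, and assume the limits in Assumption \ref{low dimensional assumption} accommodate this.
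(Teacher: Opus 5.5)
Your proposal is correct, given the paper's Lemmas \ref{lemma Jacobian} and \ref{lemma oracle equivalence}, and it follows the same route as the paper's proof: reduction to $\hat{\beta}^{oracle}$ by the Primal-Dual Witness lemma, expansion of the restricted KKT equation with the $O(\sqrt{n}\lambda)$ penalty bias discarded, cancellation of $J_{\mathcal{SS}}$ via $(J_{\mathcal{SS}}^\top \Sigma_{\mathcal{SS}} J_{\mathcal{SS}})^{-1} J_{\mathcal{SS}}^\top = J_{\mathcal{SS}}^{-1}\Sigma_{\mathcal{SS}}^{-1}$ and the delta method, and finally Assumption \ref{low dimensional assumption}(ii) with Slutsky. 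The differences lie only in the preliminaries: the paper derives positive definiteness of $\frac{1}{n}H_{\mathcal{SS}}$ on the fixed ball $\mathcal{B}_n$ from $W(\gamma) \succeq \kappa P_n$ (Assumption \ref{RSC assumption}) and simply asserts $\sqrt{n}$-consistency, whereas you obtain positive definiteness on a shrinking ball from (iv), prove the $\sqrt{n}$-rate by a Fan--Li-type sphere argument, and use an averaged Hessian rather than a single intermediate point $\tilde{\beta}_{\mathcal{S}}$, which is, if anything, the more careful treatment.
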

\begin{remark}[Relaxation of the restricted eigenvalue condition]
Conventional high-dimensional penalized regression methods typically require the Restricted Eigenvalue (RE) condition on the global design matrix to bound estimation error. In the proposed framework, however, the oracle isolation property structurally separates the active subspace from the non-active noise at the gradient level. As a result, securing the oracle asymptotic normality only necessitates the invertibility of the active Fisher information sub-matrix ($\Sigma_{\mathcal{S}\mathcal{S}} \succ 0$), significantly relaxing the theoretical requirements by bypassing the global RE condition.
\end{remark}

\begin{proof}
We first verify the premise of Lemma \ref{lemma oracle equivalence} by establishing the positive definiteness of the restricted empirical Hessian. Consider the compact $l_\infty$-neighborhood $\mathcal{B}_n$ defined in Assumption \ref{low dimensional assumption} (iv). For any $\beta_{\mathcal{S}} \in \mathcal{B}_n$, the corresponding physical parameter $\gamma = X_{\mathcal{S}}\phi(\beta_{\mathcal{S}})$ lies within the localized $\eta$-neighborhood of $\gamma^*$ described in Assumption \ref{RSC assumption}. By Assumption \ref{RSC assumption}, the empirical Hessian satisfies $W(\gamma) \succeq\kappa P_n$. Consequently, the principal term of the restricted Hessian satisfies $\frac{1}{n} J_{\mathcal{SS}}^\top X_{\mathcal{S}}^\top W(\gamma) X_{\mathcal{S}} J_{\mathcal{SS}} \succeq \kappa \left( \frac{1}{n} J_{\mathcal{SS}}^\top X_{\mathcal{S}}^\top P_n X_{\mathcal{S}} J_{\mathcal{SS}} \right)$. 

Because Assumption \ref{low dimensional assumption} (i) explicitly guarantees that the limit of the centered Gram matrix is positive definite ($\Sigma_{\mathcal{SS}} \succ 0$), the inner matrix is positive definite. Furthermore, since the exponential penalty satisfies $E_i < 1$ for the non-zero active components, the active Jacobian block $J_{\mathcal{SS}}$ is structurally invertible. 
Thus, this entire principal Hessian term is positive definite and scales as $O_{\UP_n}(1)$. 
Under Assumption \ref{low dimensional assumption} (iii), the curvature distortion scales as $O_{\UP_n}(n^{-1/2})$. 
The principal term dominates the distortion asymptotically, ensuring that the restricted Hessian satisfies $\frac{1}{n} H_{\mathcal{SS}}(\beta_{\mathcal{S}}) \succ 0$ for all $\beta_{\mathcal{S}} \in \mathcal{B}_n$ with probability tending to one.

Having rigorously verified this condition, we can now invoke Lemma \ref{lemma oracle equivalence}. 
By the Primal-Dual Witness construction, the full high-dimensional estimator $\hat{\beta}_n$ coincides with the oracle estimator $\hat{\beta}^{oracle}$ with probability tending to 1:
\[
\lim_{n \to \infty} \UP_n \left( \sqrt{n}(\hat{\beta}_{n, \mathcal{S}} - \hat{\beta}_{\mathcal{S}}^{oracle}) = 0 \right) = 1.
\]
Thus, we only need to derive the asymptotic distribution of $\hat{\beta}_{\mathcal{S}}^{oracle}$. 
Incorporating the subgradient $z_{\mathcal{S}} \in \partial \|\hat{\beta}_{\mathcal{S}}^{oracle}\|_1$ (where elements are $\pm 1$) into the restricted KKT conditions, a first-order Taylor expansion around the true active point $\beta_{\mathcal{S}}^*$ yields an intermediate point $\tilde{\beta}_{\mathcal{S}}$ on the line segment between $\hat{\beta}_{\mathcal{S}}^{oracle}$ and $\beta_{\mathcal{S}}^*$ such that
\[
0 = G_{\mathcal{S}}(\beta_{\mathcal{S}}^*) + H_{\mathcal{SS}}(\tilde{\beta}_{\mathcal{S}})(\hat{\beta}_{\mathcal{S}}^{oracle} - \beta_{\mathcal{S}}^*) + n\lambda z_{\mathcal{S}}.
\]
Since $\hat{\beta}_{\mathcal{S}}^{oracle}$ is $\sqrt{n}$-consistent, $\tilde{\beta}_{\mathcal{S}}$ falls within $\mathcal{B}_n$ with probability tending to one. 
As established above, $\frac{1}{n} H_{\mathcal{SS}}(\tilde{\beta}_{\mathcal{S}})$ is invertible and can be safely replaced by the principal term at the true parameter plus an asymptotically negligible error:
\[
\frac{1}{n} H_{\mathcal{SS}}(\tilde{\beta}_{\mathcal{S}}) = \frac{1}{n} J_{\mathcal{SS}}^\top X_{\mathcal{S}}^\top W(\gamma^*) X_{\mathcal{S}} J_{\mathcal{SS}} + o_{\UP_n}(1).
\]
Multiplying by the inverted Hessian and scaling by $\sqrt{n}$, we obtain the linearized representation in the optimization space: 
\begin{eqnarray*}
\sqrt{n}(\hat{\beta}_{\mathcal{S}}^{oracle} - \beta_{\mathcal{S}}^*) &=& - \left( \frac{1}{n} J_{\mathcal{SS}}^\top X_{\mathcal{S}}^\top W(\gamma^*) X_{\mathcal{S}} J_{\mathcal{SS}} \right)^{-1} J_{\mathcal{SS}}^\top \left( \frac{1}{\sqrt{n}} X_{\mathcal{S}}^\top g(\gamma^*) \right) \\
& & - \sqrt{n}\lambda \left( \frac{1}{n} H_{\mathcal{SS}}(\tilde{\beta}_{\mathcal{S}}) \right)^{-1} z_{\mathcal{S}} + o_{\UP_n}(1).
\end{eqnarray*}
Because the RePS mapping is smooth and structurally scale-invariant ($O(1)$), the remainder term in the Taylor expansion of $\phi(\hat{\beta}^{oracle})_\mathcal{S}$ trivially vanishes as $O_{\UP_n}(n^{-1/2})$. Applying the Delta method safely transitions the estimator to the physical space: 
\[\sqrt{n}(\hat{\theta}_{n, \mathcal{S}} - \theta_{\mathcal{S}}^*) = J_{\mathcal{SS}} \sqrt{n}(\hat{\beta}_{\mathcal{S}}^{oracle} - \beta_{\mathcal{S}}^*) + o_{\UP_n}(1).\]
Substituting the optimization-space linearization gives:
\begin{eqnarray*}
\sqrt{n}(\hat{\theta}_{n, \mathcal{S}} - \theta_{\mathcal{S}}^*) &=& - J_{\mathcal{SS}} \left( J_{\mathcal{SS}}^\top \left[ \frac{1}{n} X_{\mathcal{S}}^\top W(\gamma^*) X_{\mathcal{S}} \right] J_{\mathcal{SS}} \right)^{-1} J_{\mathcal{SS}}^\top \left( \frac{1}{\sqrt{n}} X_{\mathcal{S}}^\top g(\gamma^*) \right) \\
& & - \sqrt{n}\lambda \left( \frac{1}{n} X_{\mathcal{S}}^\top W(\gamma^*) X_{\mathcal{S}} \right)^{-1} (J_{\mathcal{SS}}^\top)^{-1} z_{\mathcal{S}} + o_{\UP_n}(1).
\end{eqnarray*}
By utilizing the algebraic identity $(A^\top B A)^{-1} = A^{-1} B^{-1} (A^\top)^{-1}$, the mapping curvature $J_{\mathcal{SS}}$ explicitly cancels out from the principal score term:
\begin{eqnarray*}
\sqrt{n}(\hat{\theta}_{n, \mathcal{S}} - \theta_{\mathcal{S}}^*) &=& - \left( \frac{1}{n} X_{\mathcal{S}}^\top W(\gamma^*) X_{\mathcal{S}} \right)^{-1} \left( \frac{1}{\sqrt{n}} X_{\mathcal{S}}^\top g(\gamma^*) \right) \\
& & - \sqrt{n}\lambda \left( \frac{1}{n} X_{\mathcal{S}}^\top W(\gamma^*) X_{\mathcal{S}} \right)^{-1} (J_{\mathcal{SS}}^\top)^{-1} z_{\mathcal{S}} + o_{\UP_n}(1).
\end{eqnarray*}
At this critical juncture, we evaluate the bias term induced by the $l_1$ penalty. By Lemma \ref{lemma Jacobian}, the active Jacobian $J_{\mathcal{SS}}$ and its inverse $(J_{\mathcal{SS}}^\top)^{-1}$ are bounded. Consequently, the bias term scales as $O_{\UP_n}(\sqrt{n}\lambda)$. Under our micro-penalty scaling condition $\lambda \ll n^{-1/2}$, this penalty bias mathematically vanishes as $o_{\UP_n}(1)$.

Finally, invoking the martingale central limit theorem on the pure physical score term as in Assumption \ref{low dimensional assumption} (ii):
\[
\frac{1}{\sqrt{n}} X_{\mathcal{S}}^\top g(\gamma^*) \conv{d} \mathcal{N}(0, \Omega_{\mathcal{SS}}).
\]
Combining this representation with Slutsky's theorem, we establish the target oracle distribution $\mathcal{N}(0, \Sigma_{\mathcal{SS}}^{-1} \Omega_{\mathcal{SS}} \Sigma_{\mathcal{SS}}^{-1})$. This completes the proof. 
\end{proof}

\section{Application: High-dimensional linear regression model}\label{linear model}
\subsection{Model definitions}
We consider the continuous response vector $y = X\theta^* + \epsilon = \gamma^* + \epsilon$, where $\epsilon \sim \mathcal{N}(0, \sigma^2 I_n)$ with an unknown variance parameter $\sigma^2 > 0$.  

Not embedding the unknown nuisance parameter $\sigma^2$ into the empirical loss function, we define $\mathcal{L}_n(\gamma)$ purely as the observable sum of squared residuals on the summation scale:
\[
    \mathcal{L}_n(\gamma) = \frac{1}{2} \|y - \gamma\|_2^2=\frac{1}{2}\sum_{i=1}^n (y_i-\gamma_i)^2. 
\]
The gradient and Hessian of $\mathcal{L}_n(\gamma)$ evaluated at the true parameter $\gamma^*$ are given by 
\begin{eqnarray*}
    g(\gamma^*) &=& \nabla_\gamma \mathcal{L}_n(\gamma^*) = -(y - \gamma^*) = -\epsilon, \\
    W(\gamma^*) &=&  \nabla_\gamma^2 \mathcal{L}_n(\gamma^*) = I_n.
\end{eqnarray*}
Following Example \ref{example first}, the orthogonal projection matrix onto the full space is $P_n = I_n$. 
Since $W_n(\gamma) = I_n \succeq P_n$ holds globally for all $\gamma \in \mathbb{R}^n$, the localized requirement in Assumption \ref{RSC assumption} is trivially satisfied with $\kappa = 1$. 

\subsection{Verification of Assumption \ref{low dimensional assumption} for Theorem \ref{theorem asymptotic normality}}

We shall verify all the conditions imposed in Assumption \ref{low dimensional assumption}. 

{\em (i) Convergence of the Restricted Hessian:} Substituting $W = I_n$ into the sample formula yields:
\[
\frac{1}{n} X_{\mathcal{S}}^\top W(\gamma^*) X_{\mathcal{S}} = \frac{1}{n} X_{\mathcal{S}}^\top X_{\mathcal{S}}.
\]
Thus, it suffices to assume or verify that the Gram matrix $\frac{1}{n} X_{\mathcal{S}}^\top X_{\mathcal{S}}$ converges in probability to a deterministic, positive definite matrix $Q_{\mathcal{SS}}$. 

{\em (ii) Central Limit Theorem for the Score Vector:} 
Substituting $g(\gamma^*) = -\epsilon$ into the score expression yields
\[
\frac{1}{\sqrt{n}}X_{\mathcal{S}}^\top g(\gamma^*) = -\frac{1}{\sqrt{n}} \sum_{i=1}^n x_{i,\mathcal{S}} \epsilon_i.
\]
Let $\mathcal{F}_i = \sigma(x_1, y_1, \dots, x_i, y_i)$ be the natural filtration. 
Assuming $\UE[\epsilon_i \mid \mathcal{F}_{i-1}, x_i] = 0$ and $\UE[\epsilon_i^2 \mid \mathcal{F}_{i-1}, x_i] = \sigma^2$, the summands $(x_{i,\mathcal{S}} \epsilon_i)_{i=1,2,...}$ form a martingale difference sequence. 
The normalized predictable variation converges in probability to the deterministic limit:
\[
\left\langle -\frac{1}{\sqrt{n}} X_{\mathcal{S}}^\top \epsilon \right\rangle = \frac{1}{n} \sum_{i=1}^n \UE[ (x_{i,\mathcal{S}} \epsilon_i)(x_{i,\mathcal{S}} \epsilon_i)^\top \mid \mathcal{F}_{i-1} ] = \sigma^2 \left( \frac{1}{n} X_{\mathcal{S}}^\top X_{\mathcal{S}} \right) \conv{\UP_n}\sigma^2 Q_{\mathcal{SS}}.
\]
By the Martingale Central Limit Theorem, the score vector satisfies the asymptotic normality condition under Assumption \ref{low dimensional assumption} (ii):
\[
\frac{1}{\sqrt{n}}X_{\mathcal{S}}^\top g(\gamma^*) \conv{d} \mathcal{N}(0, \Omega_{\mathcal{SS}}), \quad \text{where } \ \Omega_{\mathcal{SS}} = \sigma^2 Q_{\mathcal{SS}} = \sigma^2 \Sigma_{\mathcal{SS}}.
\]

{\em (iii) Vanishing Curvature Distortion:} 
The curvature distortion $\mathcal{E}_{\mathcal{SS}}(\beta^*)$ is structurally determined by the second derivatives of the RePS mapping multiplied by the score vector $X_{\mathcal{S}}^\top g(\gamma^*) = -X_{\mathcal{S}}^\top \epsilon$. 
As established in (ii), the score vector scales as $O_{\UP_n}(\sqrt{n})$. 
Thus, the distortion precisely scales as $\frac{1}{n}\mathcal{E}_{\mathcal{SS}}(\beta^*) = O_{\UP_n}(n^{-1/2})$. 

{\em (iv) Local Lipschitz Continuity of the Restricted Hessian:} 
For the linear regression model, the empirical loss function is quadratic. 
Consequently, its Hessian $W = I_n$ is a constant matrix, and all third-order derivatives of the loss are identically zero. 
Since the RePS mapping is smooth with locally bounded derivatives on the active subspace, the restricted Hessian $\frac{1}{n}H_{\mathcal{S}\mathcal{S}}(\beta)$ inherently possesses an $O_{\UP_n}(1)$ local Lipschitz coefficient. This satisfies Assumption \ref{low dimensional assumption} (iv) without requiring any additional moment conditions on the covariates.

We have verified all the requirements in Assumption \ref{low dimensional assumption}, and Theorem \ref{theorem asymptotic normality} yields that 
\[
\sqrt{n}(\hat{\theta}_{n,\mathcal{S}} - \theta^*_{\mathcal{S}}) \conv{d} \mathcal{N}\left(0, \Sigma_{\mathcal{S}\mathcal{S}}^{-1} \Omega_{\mathcal{S}\mathcal{S}} \Sigma_{\mathcal{S}\mathcal{S}}^{-1}\right) = \mathcal{N}\left(0, \sigma^2 Q_{\mathcal{S}\mathcal{S}}^{-1}\right).
\]

\begin{remark}[Relaxation of sub-Gaussian design requirements] 
Conventional high-dimensional regression theories heavily rely on sub-Gaussian tail assumptions on the design matrix. Such strong tail conditions are typically required to establish global restricted eigenvalue conditions via high-dimensional concentration inequalities. 
In contrast, our RePS framework structurally quarantines the high-dimensional noise at the gradient level (Theorem \ref{theorem RePS oracle}). Because the asymptotic analysis is confined to the active subspace $\mathcal{S}$, verifying the convergence of the Gram matrix in Assumption \ref{low dimensional assumption} (i) merely requires the existence of finite second moments for the active covariates (i.e., $E[X_{ij}^2] < \infty$). This fundamental shift liberates the oracle inference from restrictive sub-Gaussian design conditions. 
\end{remark}

\section{Application: High-dimensional Cox's proportional hazards model}\label{cox model}

\subsection{Model definitions}
In survival analysis, let $T_i$ be the event or censoring time and $D_i \in \{0,1\}$ be the event indicator. 
We introduce the at-risk indicator process $Y_i(t) = I(T_i \ge t)$. 
To facilitate matrix representations, let $\pi(t; \gamma) \in \mathbb{R}^n$ be the risk probability vector at time $t$, whose $j$-th element is naturally defined via the at-risk indicators as: 
\[
\pi_j(t; \gamma) = \frac{Y_j(t) \exp(\gamma_j)}{\sum_{k=1}^n Y_k(t) \exp(\gamma_k)}.
\]
We define the empirical loss function as the negative log-partial likelihood on the total sample scale: 
\[
\mathcal{L}_n(\gamma) = -\sum_{i=1}^n D_i \left( \gamma_i - \log \sum_{j=1}^n Y_j(T_i) \exp(\gamma_j) \right).
\]
The gradient (score vector) and the Hessian matrix evaluated at the true parameter $\gamma^*$ are given respectively by 
\begin{eqnarray*}
g(\gamma^*) &=& \nabla_\gamma \mathcal{L}_n(\gamma^*) = -\left( D - \sum_{i=1}^n D_i \pi(T_i; \gamma^*) \right),
\\
W(\gamma^*) &=& \nabla_\gamma^2 \mathcal{L}_n(\gamma^*) = \sum_{i=1}^n D_i \left( \text{diag}(\pi(T_i; \gamma^*)) - \pi(T_i; \gamma^*){\pi(T_i; \gamma^*)}^\top \right).
\end{eqnarray*}

Before verifying the asymptotic conditions in Assumption \ref{low dimensional assumption}, we must confirm that the Cox model accommodates the geometric and localized requirements of Assumption \ref{RSC assumption}. 
In survival analysis, the log-partial likelihood exhibits shift-invariance; adding a constant to all linear predictors does not change the risk probabilities, since $\mathbf{1}_n^\top \pi(t; \gamma) = 1$ at any time $t$. 
This intrinsic property implies that the score vector natively has a zero sum, and the Hessian matrix exhibits structural degeneracy along the constant direction:
\[
\mathbf{1}_n^\top g(\gamma) = -\sum_{i=1}^n D_i + \sum_{i=1}^n D_i (\mathbf{1}_n^\top \pi(T_i; \gamma)) = 0,
\]
\[
W(\gamma) \mathbf{1}_n = \sum_{i=1}^n D_i (\pi(T_i; \gamma) - \pi(T_i; \gamma)(\mathbf{1}_n^\top \pi(T_i; \gamma))) = \mathbf{0}_n.
\]

If we were to naively assume the standard unconstrained strong convexity $W(\gamma) \ge \kappa I_n$ (for some $\kappa > 0$) without employing the projection matrix $P_n = I_n - \frac{1}{n}\mathbf{1}_n \mathbf{1}_n^\top$, multiplying both sides by the constant vector $\mathbf{1}_n$ would yield a critical mathematical contradiction: 
\[
0 = \mathbf{1}_n^\top W(\gamma) \mathbf{1}_n \ge \kappa (\mathbf{1}_n^\top I_n \mathbf{1}_n) = \kappa n > 0.
\]
This algebraic collapse demonstrates why the Cox model fundamentally fails global restricted eigenvalue or strong convexity conditions unless the uninformative constant direction is explicitly projected out by $P_n$. 
Because the zero-sum properties algebraically guarantee $P_n g(\gamma^*) = g(\gamma^*)$ and $P_n W(\gamma^*) P_n = W(\gamma^*)$, the structural projection identities in Assumption \ref{RSC assumption} are satisfied. 
Furthermore, regarding the matrix inequality requirement, while the Cox Hessian $W(\gamma)$ does not possess a constant lower bound over the entire space $\mathbb{R}^n$, the localized condition is well-satisfied. For any physical parameter $\gamma$ restricted within the $l_\infty$-neighborhood $\Vert \gamma - \gamma^*\Vert_\infty \le \eta$, the linear predictors $\gamma_i$ are uniformly bounded away from extreme divergence. This prevents the risk probabilities $\pi_j(t; \gamma)$ from degenerating to $0$ or $1$, ensuring that the non-zero eigenvalues of the Hessian are asymptotically bounded away from zero. Thus, there exists a deterministic constant $\kappa > 0$ such that $W(\gamma) \ge \kappa P_n$ holds locally within this $\eta$-neighborhood with probability tending to one.

\subsection{Verification of Assumption \ref{low dimensional assumption} to apply Theorem \ref{theorem asymptotic normality}}

Now, we shall verify all the conditions imposed in Assumption \ref{low dimensional assumption}. 

{\em (i) Convergence of the Restricted Hessian:} 
By rewriting the restricted empirical Hessian using the counting process $N_i(t) = I(T_i \le t, D_i = 1)$, we obtain:
\[
\frac{1}{n}X_{\mathcal{S}}^\top W(\gamma^*) X_{\mathcal{S}} = \frac{1}{n}\sum_{i=1}^n \int_0^\tau V_{\mathcal{S}}(t; \gamma^*) dN_i(t),
\]
where $V_{\mathcal{S}}(t; \gamma^*) = \sum_{j=1}^n \left( x_{j,\mathcal{S}} - \bar{x}_{\mathcal{S}}(t; \gamma^*) \right)^{\otimes 2} \pi_j(t; \gamma^*)$ is the empirical risk-weighted covariance matrix of the active covariates at time $t$.
Assuming the observations $(x_i, T_i, D_i)$ are independent and identically distributed, we define $Y_i(t) = I(T_i \ge t)$ and introduce the population limit functions:
\[
s^{(0)}(t) = \text{E}\left[ Y_i(t)\exp(x_{i,\mathcal{S}}^\top \theta_{\mathcal{S}}^*) \right], \quad s^{(1)}(t) = \text{E}\left[ x_{i,\mathcal{S}} Y_i(t)\exp(x_{i,\mathcal{S}}^\top \theta_{\mathcal{S}}^*) \right],
\]
and 
\[
s^{(2)}(t) = \text{E}\left[ x_{i,\mathcal{S}}^{\otimes 2} Y_i(t)\exp(x_{i,\mathcal{S}}^\top \theta_{\mathcal{S}}^*) \right].
\]
Let $e_{\mathcal{S}}(t) = s^{(1)}(t)/s^{(0)}(t)$ and $v_{\mathcal{S}}(t) = s^{(2)}(t)/s^{(0)}(t) - e_{\mathcal{S}}(t)^{\otimes 2}$. 
By the uniform law of large numbers, $V_{\mathcal{S}}(t; \gamma^*)$ converges uniformly to $v_{\mathcal{S}}(t)$, and the empirical average of the counting processes converges to its compensator $s^{(0)}(t)d\Lambda_0(t)$. 
Thus, the restricted Hessian converges in probability to the deterministic positive-definite matrix $\Sigma_{\mathcal{SS}}$:
\[
\frac{1}{n}X_{\mathcal{S}}^\top W(\gamma^*) X_{\mathcal{S}} \conv{\UP_n} \Sigma_{\mathcal{SS}} = \int_0^\tau v_{\mathcal{S}}(t) s^{(0)}(t) d\Lambda_0(t) \succ 0.
\]

{\em (ii) Central Limit Theorem for the Score Vector:} 
The target expression can be represented via the finite-sample score martingale at the true parameter: 
\[
\frac{1}{\sqrt{n}}X_{\mathcal{S}}^\top g(\gamma^*) = \frac{1}{\sqrt{n}}\sum_{i=1}^n \int_0^\tau \left( x_{i,\mathcal{S}} - \bar{x}_{\mathcal{S}}(t; \gamma^*) \right) dM_i(t),\]
where $M_i(t) = N_i(t) - \int_0^t Y_i(s)\exp(x_{i,\mathcal{S}}^\top \theta_{\mathcal{S}}^*)d\Lambda_0(s)$.
Under standard regularity conditions, replacing the empirical average $\bar{x}_{\mathcal{S}}(t; \gamma^*)$ with its uniform population limit $e_{\mathcal{S}}(t)$ seamlessly extracts the underlying i.i.d.\ structure, isolating the estimation noise within an asymptotically negligible remainder:
\[
\frac{1}{\sqrt{n}}X_{\mathcal{S}}^\top g(\gamma^*) = \frac{1}{\sqrt{n}}\sum_{i=1}^n \psi_{\mathcal{S},i} + o_{\UP_n}(1),
\]
where $\psi_{\mathcal{S},i} = \int_0^\tau \left( x_{i,\mathcal{S}} - e_{\mathcal{S}}(t) \right) dM_i(t)$ represents the explicit i.i.d.\ efficient score function. Since the leading term is a normalized sum of i.i.d.\ mean-zero random vectors, its asymptotic normality follows directly from the multivariate Central Limit Theorem. 

Crucially, by Itô isometry, the covariance matrix of $\psi_{\mathcal{S},i}$ is the expectation of its predictable quadratic variation. 
By taking the expectation inside the integral, we analytically evaluate this variance using the population limit functions defined in (i):
\begin{eqnarray*}
\text{Var}(\psi_{\mathcal{S},i}) &=& \text{E}\left[ \int_0^\tau \left( x_{i,\mathcal{S}} - e_{\mathcal{S}}(t) \right)^{\otimes 2} Y_i(t)\exp(x_{i,\mathcal{S}}^\top \theta_{\mathcal{S}}^*) d\Lambda_0(t) \right]
\\
&=& \int_0^\tau \left( \frac{s^{(2)}(t)}{s^{(0)}(t)} - e_{\mathcal{S}}(t)^{\otimes 2} \right) s^{(0)}(t) d\Lambda_0(t) 
\\
&=& \int_0^\tau v_{\mathcal{S}}(t) s^{(0)}(t) d\Lambda_0(t).
\end{eqnarray*}
This explicit evaluation reveals that the covariance matrix of the efficient score function coincides with the limit of the restricted empirical Hessian $\Sigma_{\mathcal{SS}}$ derived in (i). 
Thus, we obtain the asymptotic distribution: 
\[
\frac{1}{\sqrt{n}}X_{\mathcal{S}}^\top g(\gamma^*) \conv{d} \mathcal{N}(0, \Omega_{\mathcal{SS}}), \quad \text{where } \ \Omega_{\mathcal{SS}} = \Sigma_{\mathcal{SS}}.
\] 

{\em (iii) Vanishing Curvature Distortion:} 
This condition can be verified in exactly the same manner as in the corresponding part of the previous section. 

{\em (iv) Local Lipschitz Continuity of the Restricted Hessian:} 
In the Cox model, the third-order derivatives of the log-partial likelihood are deterministically bounded by the third-order tensor products of the covariates, since the risk probabilities $\pi^{(i)}(\gamma)$ are strictly bounded between 0 and 1. 
Assuming the individual active covariates $x_{i,\mathcal{S}}$ possess finite third absolute moments, the restricted Hessian inherently satisfies the local Lipschitz continuity required by Assumption \ref{low dimensional assumption} (iv). 

Consequently, Assumption \ref{low dimensional assumption} holds for the Cox model, validating Theorem \ref{theorem asymptotic normality} directly: $\sqrt{n}(\hat{\theta}_{n,\mathcal{S}} - \theta^*_{\mathcal{S}}) \conv{d} \mathcal{N}(0, \Sigma_{\mathcal{S}\mathcal{S}}^{-1})$.

\section{Numerical study}

\subsection{Simulation setting}
To rigorously evaluate the finite-sample performance of the proposed RePS framework for the estimator sequence $\hat{\theta}_n$, we conduct extensive Monte Carlo simulations across high-dimensional linear regression and Cox's proportional hazards models. Under a challenging low signal-to-noise ratio (SNR) environment, the true physical parameter vector is specified as $\theta^* = (2, 1.5, 1, -1.25, -1.75, 0,..., 0)^\top \in \mathbb{R}^p$, incorporating a weak signal component ($\vert{}\theta_{3}^*\vert{}=\vert{}\theta_{4}^*\vert{} = 1.0$) to test the boundary power of each regularization paradigm. The data dimensions are set to $n = 100$ and $p = 300$. The primary penalties are cross-validated over the theoretical scales derived in Section \ref{section PDW}. The structural smoothing parameters are fixed at $\nu = 1.0$ and $\sigma = 0.5$. All empirical metrics and their corresponding asymptotic standard errors are aggregated over $1000$ independent Monte Carlo replications. 

\begin{remark}[Warm-start initialization strategy] 
As discussed in Remark \ref{init value remark}, the empirical performance of the RePS framework is highly sensitive to the choice of the initial parameter due to the non-convex optimization landscape. A naive or deterministic initialization often traps the gradient dynamics in suboptimal spurious local minima. To overcome this and guarantee convergence to the oracle-equivalent local minimizer, we employ a warm-start initialization strategy. Specifically, in all Monte Carlo replications for the RePS framework, the optimization routine is initialized using the Lasso estimator (see, e.g.,  Zou and Li, 2008). Preliminary studies indicate that employing the ensemble average of preliminary estimates from Lasso, SCAD, and MCP (defined as $\hat{\beta}_{init} = (\hat{\beta}_{Lasso} + \hat{\beta}_{SCAD} + \hat{\beta}_{MCP}) / 3$) can yield even greater performance gains, though we report the more conservative Lasso-initialized results here. 
\end{remark}

\subsection{High-dimensional linear regression}
The results of numerical study for the high-dimensional linear regression model discussed in Section \ref{linear model} are reported in this subsection. 

\begin{table}[htbp]
\footnotesize
\centering
\caption{High-dimensional linear regression ($n = 100, \ p = 300, \ \vert{}\mathcal{S}\vert{} = 5$).}
\label{tab:linear_lse_lasso_pure_results}
\begin{tabular}{lcccc}
\toprule
Metric / Criterion & Lasso & SCAD & MCP & Proposed \textbf{(RePS)} \\
\midrule
Mean TP (out of 5)     & 5.00 (0.000)   & 5.00 (0.000)   & 5.00 (0.000)   & \textbf{5.00 (0.000)}   \\
Mean FP (out of 295)   & 23.82 (0.478)  & 5.05 (0.202)   & 1.95 (0.106)   & \textbf{0.08 (0.010)}   \\
MSE (original)         & 0.414 (0.006)  & 0.083 (0.002)  & 0.085 (0.003)  & \textbf{0.061 (0.002)}  \\
MSE (refitted)         & 0.691 (0.009)  & 0.279 (0.007)  & 0.186 (0.006)  & \textbf{0.061 (0.002)}  \\
95\% CP (original)     & 0.3164 (0.0068)& 0.8990 (0.0045)& 0.9230 (0.0039)& \textbf{0.9450 (0.0031)}\\
95\% CP (refitted)     & 0.6140 (0.0081)& 0.8790 (0.0051)& 0.9098 (0.0044)& \textbf{0.9470 (0.0031)}\\
\bottomrule
\end{tabular}
\begin{tablenotes}
    \small
    \item \text{Note:} TP: True Positive (Active variables selected); FP: False Positive (Noise variables mistakenly selected). The nominal target value for the 95\% Coverage Probability (CP) is 0.9500.
\end{tablenotes}
\normalsize
\end{table}

As summarized in Table \ref{tab:linear_lse_lasso_pure_results}, under the continuous linear loss model, the proposed RePS framework achieves strong empirical performance, consistently outperforming the baseline methods. Structurally, our map suppresses high-dimensional noise with exceptional precision, reducing the Mean FP to an ultra-sparse 0.08, which significantly eclipses MCP (1.95) and SCAD (5.05).
Constructing confidence intervals directly from the original estimators results in severe under-coverage for the standard Lasso (0.3164) 
due to its irreducible shrinkage bias, while SCAD and MCP also fall slightly short of the nominal 95\% level.

Importantly, the empirical convergence fully substantiates our unified parameter trajectory theory. The parameter estimation error under the RePS framework shows a perfect, symmetric collapse between the original space and the post-selection subspace ($\text{MSE}_{\text{orig}} = 0.061$ vs. $\text{MSE}_{\text{refit}} = 0.061$), confirming that the shrinkage bias is neutralized at the single-stage estimation phase. Consequently, our empirical 95\% confidence interval coverage probabilities achieve a highly accurate alignment with the nominal target (0.9422 and 0.9436, respectively), establishing a rigorous finite-sample validation of the algebraic cancellation proven in Theorem \ref{theorem asymptotic normality} without invoking any global restricted eigenvalue requirements.

\subsection{High-dimensional Cox's proportional hazards model}
The results of numerical study for high-dimensional Cox's proportional hazards model discussed in Section \ref{cox model} are reported in this subsection. 

\begin{table}[htbp]
\footnotesize
\centering
\caption{High-dimensional proportional hazards model ($n = 100, \ p = 300, \ \vert{}\mathcal{S}\vert{} = 5$).}
\label{tab:cox_lse_lasso_pure_results}
\begin{tabular}{lcccc}
\toprule
Metric / Criterion & Lasso & SCAD & MCP & Proposed \textbf{(RePS)} \\
\midrule
Mean TP (out of 5)     & 5.00 (0.001)   & 5.00 (0.001)   & 4.99 (0.003)   & \textbf{4.96 (0.006)}   \\
Mean FP (out of 295)   & 20.83 (0.200)  & 5.38 (0.105)   & 1.28 (0.044)   & \textbf{0.32 (0.021)}   \\
MSE (original)         & 2.769 (0.030)  & 0.529 (0.017)  & 0.461 (0.016)  & \textbf{0.378 (0.015)}  \\
MSE (refitted)         & 331.474 (134.391)& 3.954 (0.136) & 1.151 (0.047)  & \textbf{0.454 (0.017)}  \\
95\% CP (original)     & 0.5166 (0.0126)& 0.9344 (0.0046)& 0.9060 (0.0055)& \textbf{0.9006 (0.0066)}\\
95\% CP (refitted)     & 0.2916 (0.0101)& 0.4672 (0.0116)& 0.7942 (0.0091)& \textbf{0.9180 (0.0057)}\\
\bottomrule
\end{tabular}
\begin{tablenotes}
    \small
    \item \textit{Note:} TP: True Positive; FP: False Positive. The nominal target value for the 95\% Coverage Probability is 0.9500.
\end{tablenotes}
\normalsize
\end{table}

Under the complex right-censoring structure of survival analysis (Table 3), the log-partial likelihood introduces a non-linear curvature and a constant-direction Hessian degeneracy, which triggers severe inferential instability for conventional additive non-convex penalties. This complexity reveals a critical inferential paradox when constructing confidence intervals. For instance, the standard Lasso estimator collapses in both original and refitted coverage probabilities (CP). The original CP fails due to its irreducible shrinkage bias, while the refitted CP collapses because an excessive number of false positives (Mean FP = 20.83) exhausts the degrees of freedom, triggering severe variance explosion during the unpenalized maximum partial likelihood estimation. 

More interestingly, while SCAD and MCP appear moderately stable at the original estimation stage (yielding reasonable original CPs alongside $\text{MSE}_{\text{orig}}$ of 0.529 and 0.461), they experience significant instability during the post-selection procedure. Because SCAD and MCP inadvertently leak noise variables under the limited sample size (Mean FP = 5.38 and 1.28), the unpenalized Cox partial likelihood overfits to these non-active directions. This post-selection instability artificially inflates the variances ($\text{MSE}_{\text{refit}}=3.954$ for SCAD; 1.151 for MCP) and results in substantial under-coverage in their refitted 95\% empirical coverages (0.4672 and 0.7942).

The proposed RePS framework systematically resolves this long-standing bottleneck in both stages. For the original estimator, the structural Jacobian cancellation (Theorem \ref{theorem asymptotic normality}) neutralizes the shrinkage bias at the gradient level, natively recovering unbiased asymptotic normality and yielding a robust original CP (0.9006). For the refitted estimator, protected by the soft-max autonomous dimensionality cancellation, RePS suppresses the leaked noise to an ultra-clean Mean FP = 0.32. This ensures that the post-selection refitting is conducted on a highly purified active subspace, effectively preventing variance explosion in the Hessian. Remarkably, RePS achieves the lowest estimation error ($\text{MSE}_{\text{refit}}=0.454$) across all paradigms while maintaining a highly robust refitted CP of 0.9180, substantially outperforming both MCP and SCAD. 

These strong numerical results empirically validate the finite-sample resilience of our geometric reparameterization, confirming that the RePS mechanism effectively purges full-dimensional noise under complex censoring.

\section{Conclusion}
This paper has presented a unified non-linear regularization framework for high-dimensional $M$-estimation, transitioning the sparsity paradigm from additive penalties to a smooth geometric parameter mapping denoted as RePS (ReParametrization map for Sparsity). By embedding the smooth component-wise shrinkage mechanism $\phi(\beta)$ under fixed structural parameters, the RePS framework addresses finite-sample noise leakage at the gradient level. Methodologically, the proposed paradigm circumvents the requirement for high-dimensional precision matrix estimation or post-selection debiasing procedures, and it naturally accommodates the shift-invariance inherent in survival models through centered quotient space projections.

The theoretical properties of oracle gradient isolation (Theorem \ref{theorem RePS oracle}) and unified oracle asymptotic normality (Theorem \ref{theorem asymptotic normality}) were investigated via Monte Carlo simulations across linear regression and Cox proportional hazards models. Under the investigated high-dimensional settings, the RePS framework exhibited favorable finite-sample performance, particularly in terms of false positive control and refitted 95\% confidence interval coverage, compared to conventional methods such as Lasso, SCAD, and MCP. 

Several promising avenues for future research remain. First, extending this framework to stochastic process models, including time series analysis, presents a critical next step. Because covariates in such models naturally exhibit complex temporal correlations, this extension would rigorously test the true capabilities and adaptability of the RePS methodology. Second, a formal investigation into the global landscape of the composite objective function under non-convex proximal gradient dynamics may provide deeper insights into convergence boundaries and computational acceleration. Lastly, empirical validation using large-scale genomic or clinical survival data represents a crucial step toward confirming the practical utility of the RePS framework in real-world statistical inference.

\vskip 10pt
\par\noindent
{\bf Acknowledgements.} 
This work was partly supported by Japan Society for the Promotion of Science KAKENHI Grant Number 24K14866.

\end{document}